\newcommand{\F}{\mathbb{F}}
\def\ba{\begin{array}}
\def\ea{\end{array}}
\DeclareMathOperator*{\argmin}{arg\,min}
\def\0{{\bf 0}}
\def\a{{\bf a}}
\def\b{{\bf b}}
\def\t{{\bf t}}
\def\C{{\mathcal C}}
\def\P{{\mathcal P}}
\def\D{{\mathcal D}}
\def\F{{\tt F}}
\def\x{{\bf x}}
\def\y{{\bf y}}
\def\p{{\bf p}}
\def\q{{\bf q}}
\def\u{{\bf u}}
\def\v{{\bf v}}
\def\T{{\tt T}}
\newcommand{\ket}[1]{| #1 \rangle}
\newcommand{\bra}[1]{\langle #1|}
\newcommand{\be}{\begin{equation}}
\newcommand{\ee}{\end{equation}}
\newcommand{\bea}{\begin{eqnarray}}
\newcommand{\eea}{\end{eqnarray}}
\newcommand{\bes}{\begin{equation*}}
\newcommand{\ees}{\end{equation*}}
\newcommand{\beas}{\begin{eqnarray*}}
\newcommand{\eeas}{\end{eqnarray*}}
\newtheorem*{rep@theorem}{\rep@title}
\newcommand{\newreptheorem}[2]{%
\newenvironment{rep#1}[1]{%
 \def\rep@title{#2 \ref{##1} (restated)}%
 \begin{rep@theorem}}%
 {\end{rep@theorem}}}
\newtheorem{thm}{Theorem}
\newtheorem*{thm*}{Theorem}
\newtheorem{lem}[thm]{Lemma}
\newtheorem*{lem*}{Lemma}
\newtheorem{prop}[thm]{Proposition}
\newtheorem{defn}[thm]{Definition}
\newtheorem{rem}[thm]{Remark}
\newtheorem*{prob*}{Problem}
\begin{document}

\title{Lower bounds for quantum-inspired classical algorithms via communication complexity }

\author{Nikhil S. Mande}
\affiliation{Department of Computer Science, University of Liverpool, Liverpool L69 3BX, UK}
\email{nikhil.mande@liverpool.ac.uk}

\author{Changpeng Shao}
\affiliation{Academy of Mathematics and Systems Science, Chinese Academy of Sciences, Beijing, 100190 China}
\email{changpeng.shao@amss.ac.cn}


\maketitle

\begin{abstract}
Quantum-inspired classical algorithms provide us with a new way to understand the computational power of quantum computers for practically-relevant problems, especially in machine learning. In the past several years, numerous efficient algorithms for various tasks have been found, while an analysis of lower bounds is still missing.
Using communication complexity, in this work we propose the first method to study lower bounds for these tasks. We mainly focus on lower bounds for solving linear regressions, supervised clustering, principal component analysis, recommendation systems, and Hamiltonian simulations. For those problems,  we prove a quadratic lower bound in terms of the Frobenius norm of the underlying matrix. As quantum algorithms are linear in the Frobenius norm for those problems, our results mean that the quantum-classical separation is at least quadratic.
As a generalisation, we extend our method to study lower bounds analysis of quantum query algorithms for matrix-related problems using quantum communication complexity. Some applications are given.

\end{abstract}

\section{Introduction}

Initiated by Tang in the seminal work \cite{tang2019quantum}, quantum-inspired classical algorithms now provide a new way for us to understand the power and limitations of quantum computers for some practically relevant problems, especially in machine learning \cite{chia2022sampling}. 
These algorithms are very different from standard classical algorithms that usually output vector solutions. Instead, they are more comparable to quantum algorithms and can be used as a tool to study quantum advantage.

Roughly, in a quantum-inspired classical algorithm for a matrix-relevant problem, such as a linear regression problem $\arg\min_{\x} \|A\x-\b\|$, we are given access to a matrix $A$ and vector $\b$ via some data structures (similar to QRAM access), and the goal is to output a data structure to the solution $\x$. The requirements of this data structure are that we must be able to query entries of the solution and sample from the distribution defined by the solution under the $\ell_2$-norm. This data structure can be thus be viewed as a classical analog of the quantum state of the solution. 

The importance of quantum-inspired classical algorithms lies in the finding that quantum computers only achieve polynomial speedups rather than exponential speedups as we originally expected for some matrix problems when the input matrices have low rank.
A typical example is the linear regression problem. Assuming the quantum random access memory (QRAM) data structure, the best quantum algorithm known so far has complexity linear in $\|A\|_{\F}$, which is believed to be optimal \cite{chakraborty_et_al:LIPIcs:2019:10609}. Here $\|A\|_{\F}$ is the Frobenius norm of $A$, which is defined as the square root of the sum of the absolute squares of all entries. Using a similar data structure to QRAM, some efficient quantum-inspired classical algorithms were also proposed \cite{bakshi2023improved, shao2022faster, gilyen2022improved, chia2022sampling}.
The best algorithms known so far have complexity quartic in $\|A\|_{\F}$ \cite{bakshi2023improved, shao2022faster}. This implies that the quantum speedup is at most quartic in terms of the Frobenius norm. We remark that the efficiency of quantum/quantum-inspired algorithms is also influenced by other parameters, however, the Frobenius norm is usually the dominating one.

To better understand quantum-classical separations and also to see how far these quantum-inspired classical algorithms are far from optimal, we provide the first approach for proving lower bounds of quantum-inspired classical algorithms. 

\subsection{Main results}

Our main focus of this work is on the lower bounds analysis of quantum-inspired classical algorithms for five problems: linear regressions, supervised clustering, principal component analysis, recommendation systems, and Hamiltonian simulation. Algorithms for these problems are well-studied both in the quantum case (e.g., \cite{lloyd2013quantum,chakraborty_et_al:LIPIcs:2019:10609,kerenidis2017quantum,gilyen2019quantum,lloyd2014quantum-pca,harrow2009quantum}, to name a few) and in the quantum-inspired classical case (e.g., \cite{chia2022sampling,bakshi2023improved,tang2021quantum,gilyen2018quantum,gharibian2022dequantizing}, to name a few). 
There are lots of results on the lower bounds of quantum algorithms \cite{gilyen2019quantum,harrow2009quantum} (also see relevant references in \cite{montanaro-shao2023}).
However, lower bounds for quantum-inspired classical algorithms are missing.

Our main technique for lower bounds analysis is communication complexity. We will establish a connection between quantum-inspired classical algorithms and communication complexity, then use some well-known results in communication complexity (such as lower bounds for the Set-Disjointness) to obtain some lower bounds.
As a by-product, we also extend our method to study lower bounds analysis of quantum query algorithms for functions of matrices (see Section \ref{section: quantum connection}).

We now give more details. In quantum-inspired classical algorithms, we assume that we are given access to inputs via certain data structures called ``SQ''. Here S and Q stand for Sampling and Query, respectively, and the rigorous definition of ``SQ'' is given in Section \ref{section:Preliminaries}. For example, for a vector $\v=(v_1,\ldots,v_n)$, $S(\v)$ allows us to sample from the distribution defined by $\v$ under the $\ell_2$ norm (i.e., the probability of seeing $i$ is $|v_i|^2/\|\v\|^2$), and $Q(\v)$ allows us to query an entry and/or the norm of $\v$.

Our main results are briefly summarised in Table \ref{table2}. From the table, we can claim that for these tasks the quantum speedup is at least quadratic. Apart from the supervised clustering problem, all other bounds are not tight so far with respect to $\|A\|_\F$. So it is interesting to know for these problems (and/or other problems not included), whether the lower bounds can be improved or more efficient quantum-inspired classical algorithms can be found.

\setlength{\arrayrulewidth}{0.2mm}
{\renewcommand
\arraystretch{1.5}
\begin{table}
\centering
\begin{tabular}{|c|c|c|c|c|} 
 \hline
\multirow{2}{*}{Problems} & Upper bounds & Upper bounds  & Lower bounds \\
& (Q) & (QIC) & (QIC, this paper) \\ \hline
Linear Regression & $\widetilde O(\|A\|_\F)$ \cite{chakraborty_et_al:LIPIcs:2019:10609} & $\widetilde O(\|A\|_\F^4)$ \cite{bakshi2023improved} & $\widetilde \Omega(\|A\|_\F^2)$ \\
Supervised clustering & $\widetilde O(\|A\|_\F^2\|\b\|^2/\varepsilon )$ \cite{lloyd2013quantum} & $\widetilde O(\|A\|_\F^4\|\b\|^4/\varepsilon^2 )$ \cite{chia2022sampling} & $\widetilde \Omega(\|A\|_\F^4\|\b\|^4/\varepsilon)$ \\
Principal component analysis & $\widetilde O(\|A\|_\F)$ \cite{chakraborty_et_al:LIPIcs:2019:10609} & $\widetilde O(\|A\|_\F^6)$ \cite{chia2022sampling} & $\widetilde \Omega(\|A\|_\F^2)$  \\
Recommendation systems & $\widetilde O(\|A\|_\F)$  \cite{kerenidis2017quantum} & $\widetilde O(\|A\|_\F^4)$ \cite{bakshi2023improved} & $\widetilde \Omega(\|A\|_\F^2)$  \\
Hamiltonian simulation & $\widetilde O(\|A\|_\F)$ \cite{gilyen2019quantum} & $\widetilde O(\|A\|_\F^4)$ \cite{bakshi2023improved} & $\widetilde \Omega(\|A\|_\F^2)$ \\ \hline
\end{tabular}
\caption{Upper and lower bounds for quantum (Q) and quantum-inspired classical (QIC) algorithms for some other problems. We mainly show the dependence on $\|A\|_\F$ here. The explicit statements of these problems are given in Section \ref{section:Lower bounds}.
}
\label{table2}
\end{table}
}

\subsection{Summary of main ideas}

The form of access to input data in quantum-inspired classical algorithms is very similar to that in the setting of \emph{query complexity}. However, a quantum-inspired classical algorithm has more power than a conventional query algorithm, since, for example, it can sample from the input data in a way that query algorithms can not. Thus, showing lower bounds query algorithms does not directly imply lower bounds for quantum-inspired classical algorithms. A standard model of computation that is studied, and richer than the query model, is that of \emph{communication complexity}. As it turns out, quantum-inspired classical algorithms can indeed be efficiently simulated by communication protocols. Thus, showing communication complexity lower bounds for a task implies lower bounds for quantum-inspired classical algorithms solving the same task.
This approach is inspired by \cite{montanaro2022quantum}. 

Below we sketch the main ideas behind our lower bounds for the linear regression problem.
Consider two players Alice and Bob, who are individually computationally unbounded. Alice has as input a matrix $A^{(1)}$ and a vector $\b^{(1)}$. Bob has another matrix $A^{(2)}$ and another vector $\b^{(2)}$. They can see their own inputs, but neither of the players can see the other player's input. 
Their goal is to solve the linear regression $\x_*:=\arg\min_{\x}\|A\x-\b\|$, where $A = \begin{pmatrix}
    A^{(1)} \\ A^{(2)}
\end{pmatrix}$ and $\b=\begin{pmatrix}
    \b^{(1)} \\ \b^{(2)}
\end{pmatrix}$. To solve this problem, they are allowed to communicate with each other via a pre-determined communication protocol. The communication complexity is the total number of bits used in the communication on a worst-case input.
Without loss of generality, we assume the existence of a coordinator such that the communication is 2-way between each player and the coordinator. With a coordinator, it is easy to see how to generalise this idea to the multi-player model. Also, we assume that each of the entries of $A,\b$ are specified by {\color{black}$O(\log q)$ bits}.

In the model of quantum-inspired classical algorithms, we are given $SQ(A)$ and $SQ(\b)$, and the goal is to output $SQ(\x_*)$ up to certain relative error. A quantum-inspired classical algorithm of complexity $T$ is a classical algorithm that uses $T$ applications of $SQ(A), SQ(\b)$ and an arbitrary number of other arithmetic operations independent of ``SQ''. We show below that a quantum-inspired classical algorithm of complexity $T$ for the above linear regression implies a communication protocol of complexity {\color{black}$O(T\log(qmn))$} for the same problem, where $m$ and $n$ are the total number of rows and columns of $A$, respectively.  To see this, it suffices to demonstrate a protocol for the coordinator to implement $SQ(A)$ and $SQ(\b)$.

Let us first consider how the coordinator implements $SQ(\b)$. Recall that $SQ(\b)$ means sampling from the distribution defined by $\Pr(i) = |b_i|^2/\|\b\|^2$. The protocol is as follows: 
\begin{itemize}
    \item Firstly, Alice sends norm $\|\b^{(1)}\|$ and Bob sends norm $\|\b^{(2)}\|$ to the coordinator. At this point the coordinator knows $\|\b\|^2=\|\b^{(1)}\|^2+\|\b^{(2)}\|^2$.
    \item Secondly, the coordinator samples an index from the distribution defined by $\Pr(i) = \frac{\|\b^{(i)}\|^2}{\|\b\|^2}$.
    \item If the output is 1, then the coordinator asks Alice to sample an index from the distribution defined by $\b^{(1)}$ under the $\ell_2$ norm.
    Otherwise, the coordinator asks Bob to sample an index from the distribution defined by {\color{black}$\b^{(2)}$} under the $\ell_2$ norm.
\end{itemize}   
Now, it is not hard to check that the result is a desired sample. Next, let us consider how the coordinator implements $Q(\b)$, i.e., query the $i$th entry of $\b$, say. This task is simpler.
Firstly, Alice sends the dimension, say $m_1$, of $\b^{(1)}$ to the coordinator. Secondly, if the index $i\leq m_1$, then the coordinator asks Alice to send back the $i$-th entry of $\b^{(1)}$. Otherwise, the coordinator asks Bob to send back the $(i-m_1)$-th entry of $\b^{(2)}$. The protocol for the coordinator to use $SQ(A)$ is similar.

As a result, if we know the lower bounds of communication complexity of solving (say) linear regressions, we then can obtain lower bounds of quantum-inspired classical algorithms for the same task. With this idea, we then only need to focus on the lower bounds analysis of the communication complexity of linear regressions.
For this, we reduce some well-studied problems like the Set-Disjointness problem  to some matrix inversion problems derived from the original linear regression problem.

\subsection{Outline of the paper}

The rest of the paper is organised as follows: Section \ref{section:Preliminaries} contains some preliminary results on quantum-inspired classical algorithms and communication complexity. Section \ref{section:the connection} builds the formal connection between quantum-inspired classical algorithms and communication complexity. In Section \ref{section:Lower bounds} we show lower bounds for quantum-inspired classical algorithms. Finally, as a generalisation of the above idea, Section \ref{section: quantum connection} studies the connection between quantum query complexity and quantum communication complexity of functions of matrices. Some applications are introduced. Appendix \ref{section:another connection} describes another connection between quantum-inspired classical algorithms and communication complexity, which might be useful in future applications.

\section{Preliminaries}
\label{section:Preliminaries}

\subsection{The model for quantum-inspired classical algorithms}

In this section, we recall definitions of the model used in quantum-inspired classical algorithms. For more details, we refer the reader to \cite{chia2022sampling}.

\begin{defn}[Query access]
For a vector $\v=(v_1,\ldots,v_n) \in \mathbb{C}^n$, we have $Q(\v)$, \emph{query access to $\v$}, if for all $i \in [n]$, we can query for $v_i$. Likewise, for a matrix $A=(A_{ij}) \in \mathbb{C}^{m\times n}$, we have $Q(A)$ if for all $(i,j) \in [m] \times [n]$, we can query for $A_{ij}$. 
\end{defn}

\begin{defn}[Sampling and query access to a vector] For a vector $\v=(v_1,\ldots,v_n) \in \mathbb{C}^n$, we have $SQ(\v)$, \emph{sampling and query access to $\v$}, if we can

\begin{itemize}
    \item query for entries of $\v$ as in $Q(\v)$;
    \item obtain independent samples of indices $i \in [n]$, each distributed as $\Pr(i)=|v_i|^2/\|\v\|^2$;
    \item query for $\|\v\|$.
\end{itemize}
\end{defn}

\begin{defn}[Sampling and query access to a matrix] 
\label{defn:Sampling and query access to a matrix}
For a matrix $A \in \mathbb{C}^{m\times n}$, we have
$SQ(A)$ if we have $SQ(A_{i*})$ for all $i \in [m]$ and $SQ(\a)$ for $\a=(\|A_{1*}\|,\ldots,\|A_{m*}\|)$. Here $A_{i*}$ refers to the $i$-th row of $A$.
\end{defn}

By a quantum-inspired classical algorithm of (query) complexity $T$ for solving a matrix-related problem, e.g., solving a linear regression $\min\|A\x-\b\|$, we mean we ``solve'' the problem by obtaining the ``SQ'' data structure of the solution, with $T$ applications of $SQ(A), SQ(\b)$ and an arbitrary number of other arithmetic operations that are independent of ``SQ''.

\subsection{Communication complexity}

Communication complexity was first introduced by Andrew Yao more than four decades ago in the study of distributed computation \cite{yao1979some}. It has wide applications in many areas, especially in showing lower bounds, which has extensively been studied both in the classical and quantum fields \cite{de2002quantum,rao2020communication}. There are many models to study communication complexity in. In this work, we will focus on the coordinator model \cite{phillips2016lower}.

In the coordinator model, there are $k\geq 2$ players $\P_1,\ldots,\P_k$ and a coordinator $\C$. Each player holds some private information, and their goal is to solve some problem using as little communication as possible. The communication occurs between a player and the coordinator via a 2-way private channel. The computation is in terms of rounds: at the beginning of each round, the coordinator sends a message to one of the $k$ players, and then that player sends a message back to the coordinator. In the end, the coordinator returns an answer. Unless mentioned otherwise, the protocol must output a correct answer with probability at least $2/3$. The communication complexity is defined to be the total number of bits sent through the channels, on a worst-case input and worst-case outcomes of the internal randomness of the protocol.

A fundamental example is the Set-Disjointness problem. When $k=2$, player $\P_1$ receives an $n$-bit string $\x=(x_1,\ldots,x_n)\in\{0,1\}^n$ and player $\P_2$ receives another $n$-bit string $\y=(y_1,\ldots,y_n)\in\{0,1\}^n$, their goal is to determine if there is an $i$ such that $x_i=y_i=1$. It is well-known that the naive protocol ($\P_1$ has to send all $\x$ to the coordinator $\C$ who then sends $\x$ to $\P_2$, and finally $\P_2$ sends the right answer to $\C$) is asymptotically optimal \cite{kalyanasundaram1992probabilistic, razborov1990distributional}. Namely, the communication complexity of the Set-Disjointness problem is $\Theta(n)$. 

Another fundamental problem is the Gap-Hamming problem. In this problem, Alice has a $n$-bit string $\x$, Bob has a $n$-bit string $\y$, and they wish to compute
\be
\label{defn:gap-hamming}
G(\x,\y)
=\begin{cases}
    1 & \text{if } \Delta(\x,\y)\geq n/2+\varepsilon n, \\
    0 & \text{if } \Delta(\x,\y)\leq n/2-\varepsilon n, \\
    * & \text{otherwise}.
\end{cases}
\ee
In the above, $\Delta(\x,\y)$ computes the hamming distance of $\x,\y$.
As shown in \cite{chakrabarti2011optimal} the communication complexity for this problem is {\color{black}$\Theta(\min(n, 1/\varepsilon^2))$}.

For those problems, there are also some multiplayer versions. We state them as follows.


\begin{defn}[$k$-player Set-Disjointness problem]
\label{defn:k-DISJ}
For $i\in[k]$, player $\P_i$ receives a bit string $T_i=(T_{i1}, \ldots, T_{in})\in\{0,1\}^n$. Their goal is to determine if there is a $j\in\{2,\ldots,k\}$ such that $T_{1\ell}=T_{j\ell}=1$ for some $\ell\in[n]$.
\end{defn}

\begin{prop}[Theorem 3.3 of \cite{phillips2016lower}, Theorem 1 of \cite{woodruff2017distributed}]
\label{prop:k-DISJ}
When $n \geq 3200k$, for any classical protocol that succeeds with probability $1-1/k^3$ for solving $k$-player Set-Disjointness problem, the randomised communication complexity is $\Theta(k n)$.
\end{prop}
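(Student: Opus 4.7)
The upper bound $O(kn)$ is trivial: in a single round, each of the $k$ players sends their entire $n$-bit input to the coordinator, who then computes the answer directly.

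For the matching $\Omega(kn)$ lower bound, my plan is a direct-sum style information-complexity argument that embeds $k-1$ independent copies of 2-player Set-Disjointness into the $k$-player instance. I would fix a hard product distribution $\mu$ under which, with high probability, every pair $(T_1, T_j)$ is in fact disjoint but is nonetheless distributed like a worst-case instance for 2-player disjointness (for example, a collapsing AND-subproblem distribution of the Bar-Yossef--Jayram--Kumar--Sivakumar type, which is the standard hard input underlying the $\Omega(n)$ bound of \cite{kalyanasundaram1992probabilistic, razborov1990distributional}).

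The core step is a direct-sum decomposition of the internal information cost. Under the product structure of $\mu$, one shows that the information revealed by any protocol $\Pi$ about the joint input is at least the sum of the information revealed about each embedded pair:
\[
\mathrm{IC}_\mu(\Pi) \;\ge\; \sum_{j=2}^{k} \mathrm{IC}_{\mu_{1,j}}(\Pi_j),
\]
where $\Pi_j$ denotes the 2-player protocol obtained by having Alice play $P_1$ and Bob play $P_j$, while both simulate the remaining players' inputs using public randomness (a standard embedding in the coordinator model). The reduction argument then shows that each $\Pi_j$ is a correct protocol for 2-player Set-Disjointness on inputs $(T_1, T_j)$: since $\Pi$ is correct with probability $1-1/k^3$, a union bound over the $k-1$ embedded instances leaves each $\Pi_j$ correct with probability at least $2/3$, so the classical $\Omega(n)$ information-complexity lower bound for 2-player disjointness applies to each. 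Summing over $j$ yields $\mathrm{IC}_\mu(\Pi) = \Omega(kn)$, and since communication cost upper bounds information cost, the randomised communication complexity is $\Omega(kn)$.

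The main obstacle is that $k$-player Set-Disjointness is defined as an OR over the $k-1$ pairwise disjointness instances, so a protocol might seem to shortcut by producing only a single witness rather than solving every instance. The reason this does not break the argument is that $\mu$ is concentrated on inputs where \emph{all} pairs are disjoint, so in order to correctly output $0$ the protocol must effectively certify disjointness of each pair separately; the information-complexity machinery is precisely what converts this intuition into a per-pair $\Omega(n)$ lower bound. The specific success requirement $1-1/k^3$ is calibrated so that the union bound across the $k-1$ embedded 2-player instances leaves enough success probability in each to invoke the classical lower bound.
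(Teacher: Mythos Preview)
The paper does not prove this proposition itself; it is quoted from \cite{phillips2016lower,woodruff2017distributed}, and the paragraph following the statement only sketches the argument from those references. That sketch is a \emph{symmetrization} argument, which is quite different from your information-complexity direct sum: one picks a uniformly random index $j\in\{2,\dots,k\}$, lets Alice simulate player $\P_j$, and lets Bob simulate $\P_1$, all remaining players, and the coordinator (fixing the other $T_i$ to be disjoint from $T_1$). The resulting 2-player protocol solves ordinary Set-Disjointness on $(T_1,T_j)$, and its expected communication is at most a $1/(k-1)$ fraction of the total coordinator-model communication; invoking the $\Omega(n)$ two-party bound then gives $\Omega(kn)$. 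This is shorter and more elementary than an information-cost decomposition, and it is also what makes the ``Hamming weight $\Theta(n)$'' and ``unique intersection'' refinements in the paper immediate, since those features are inherited directly from the two-party hard distribution.

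Your proposal is a reasonable alternative blueprint, but the specific decomposition you wrote down is where it would break: you sum $\mathrm{IC}_{\mu_{1,j}}(\Pi_j)$ over $j=2,\dots,k$, yet every one of these terms involves the \emph{same} input $T_1$, so super-additivity of information over these overlapping pairs does not follow from a product structure on $\mu$. Relatedly, in your embedding Alice plays $\P_1$ and Bob plays $\P_j$, so the two-party cost of $\Pi_j$ is (communication through $\P_1$) $+$ (communication through $\P_j$); summing over $j$ over-counts the $\P_1$ channel $k-1$ times and does not yield a clean $k$-factor. The fix the references use is precisely to put $\P_1$ on Bob's side (so its channel is free) and randomise over which single $\P_j$ Alice holds. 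Your remark about the role of the $1-1/k^3$ success probability as a union bound over $k-1$ embedded instances is also not how it is used in the symmetrization proof; there a single random $j$ is chosen, not all $k-1$ simultaneously.
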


The above result holds even if all $T_i$ have Hamming weight $\Theta(n)$. This indeed follows from the proof of the above proposition in the cited references. Here we only highlight the main idea. More details can be found in \cite{phillips2016lower,woodruff2017distributed}. We use the symmetrization argument to reduce the $k$-player Set-Disjointness problem to a standard 2-player Set-Disjointness problem.
In the symmetrization argument, Alice plays the role of one player and Bob plays the role of all other players as well as the coordinator.
The 2-player Set-Disjointness problem is well studied and is known hard even if the Hamming weight of the inputs has order $n$. By the symmetrization argument, the complexity of the original $k$-player Set-Disjointness problem is $\Omega(k n)$ when $|T_j|=\Theta(n)$ for all $j$. Indeed, in the symmetrization argument, all inputs have Hamming weight $\Theta(n)$, see the reduction at 
\cite[page 8]{phillips2016lower}. Additionally, similar to the $2$-player Set-Disjointness problem, the $k$-player Set-Disjointness problem is also hard when there is at most one $j$ such that $T_{1\ell} = T_{j \ell}=1$.

\begin{defn}[$k$-Gap-Hamming problem]
\label{defn:k-gap}
For $i\in \{1,2,\ldots,k+1\}$, player $\P_i$ receives a bit string $T_i\in\{\pm 1\}^n$ with the promise that $T := \sum_{i=1}^k T_i\in \{\pm 1\}^n$ and $\sum_{j=1}^k T_{j}\cdot T_{k+1} \in [-c_2\sqrt{n}, c_2\sqrt{n}]$.
Their goal is to determine if $\sum_{i=1}^k T_i\cdot T_{k+1} \geq c_1\sqrt{n}$ or $\leq -c_1\sqrt{n}$. If $\sum_{i=1}^k T_i\cdot T_{k+1}$ is in between then any output is allowed.
\end{defn}

\begin{prop}[Theorem 3.3 of \cite{li2023ell_p}]
\label{prop:k-gap}
Any protocol that solves the $k$-Gap-Hamming problem with large constant probability requires $\Omega(kn)$ bits of communication.
\end{prop}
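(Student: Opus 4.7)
The plan is to prove the $\Omega(kn)$ bound by a symmetrization reduction from $2$-player Gap-Hamming, whose randomized communication complexity is $\Omega(n)$ in the hard regime $\varepsilon = \Theta(1/\sqrt{n})$ of~(\ref{defn:gap-hamming})~\cite{chakrabarti2011optimal}. Concretely, from any $C$-bit protocol $\Pi$ solving $k$-Gap-Hamming I will extract a $2$-player Gap-Hamming protocol of communication $O(C/k)$, forcing $C = \Omega(kn)$.

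\textbf{Embedding.} Given a $2$-player Gap-Hamming input $(\x,\y) \in \{\pm 1\}^n \times \{\pm 1\}^n$, Alice and Bob first draw an index $i^* \in [k]$ from public randomness. Alice plays $\P_{i^*}$ with $T_{i^*} := \x$; Bob plays all other parties and the coordinator $\C$, setting $T_{k+1} := \y$ and filling the remaining $k-1$ slots with $(k-1)/2$ copies of the all-ones vector $\e$ and $(k-1)/2$ copies of $-\e$. The structural promise $\sum_{i=1}^k T_i \in \{\pm 1\}^n$ already forces $k$ to be odd (any coordinate of the sum of $k$ signs has the parity of $k$), so $k-1$ is even, the dummy inputs telescope to $\0$, and $\sum_{i=1}^k T_i = \x \in \{\pm 1\}^n$. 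Consequently $\sum_{i=1}^k T_i \cdot T_{k+1} = \x \cdot \y = n - 2\Delta(\x,\y)$, so the gap promise $\bigl|\sum_i T_i \cdot T_{k+1}\bigr| \leq c_2\sqrt{n}$ becomes exactly the standard Gap-Hamming promise at $\varepsilon = \Theta(1/\sqrt{n})$, and the $k$-Gap-Hamming answer coincides with the $2$-player Gap-Hamming answer on $(\x,\y)$.

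\textbf{Accounting and wrap-up.} In the coordinator model every transmitted bit sits on some link $(\C, \P_j)$, so writing $c_j$ for the random number of bits on link $j$, we have $\sum_{j=1}^{k+1} c_j \leq C$. Only the bits with $j = i^*$ cross the Alice-Bob cut, since everything else is internal to Bob's simulation; thus the simulated $2$-player protocol uses $c_{i^*}$ bits. Averaging over the uniform choice of $i^* \in [k]$ bounds the expected Alice-Bob communication by $C/k$, and fixing a good $i^*$ (or, equivalently, running the argument against the hard distribution for $2$-player Gap-Hamming and applying Markov) yields a $2$-player protocol of communication $O(C/k)$ with essentially the same success probability. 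The lower bound of~\cite{chakrabarti2011optimal} then forces $C/k = \Omega(n)$, i.e., $C = \Omega(kn)$. The main delicacy is ensuring that both the structural and gap promises survive the embedding simultaneously, which is handled by the parity of $k$ together with the $\pm\e$ dummy cancellation; beyond this bookkeeping the argument is a textbook symmetrization of the kind underlying Proposition~\ref{prop:k-DISJ}.
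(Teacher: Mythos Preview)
The paper does not give its own proof of this proposition; it simply imports it as Theorem~3.3 of \cite{li2023ell_p}. So there is nothing to compare your argument against in the paper itself. Your symmetrization reduction to $2$-player Gap-Hamming is exactly the standard technique for such coordinator-model lower bounds (and is presumably close to what the cited reference does), and the embedding you describe---$T_{i^*}=\x$, $T_{k+1}=\y$, the remaining $k-1$ slots filled with $(k-1)/2$ copies each of $\e$ and $-\e$---correctly preserves both the structural promise $\sum_{i\le k}T_i\in\{\pm1\}^n$ and the gap promise, with the answer equal to the $2$-player Gap-Hamming answer on $(\x,\y)$.

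One point to tighten: your sentence ``averaging over the uniform choice of $i^*\in[k]$ bounds the expected Alice--Bob communication by $C/k$'' is not quite justified as written, because the $(k{+}1)$-party input tuple itself changes with $i^*$, so the quantities $c_j$ you are averaging are evaluated on different inputs. The clean fix---which your parenthetical about the hard distribution gestures at---is to also place the $\pm\e$ dummies in uniformly random positions among $[k]\setminus\{i^*\}$, so that the induced distribution on $(T_1,\dots,T_k)$ is exchangeable; then $\E[c_j]$ is the same for every $j\in[k]$ and the $C/k$ bound follows. With that adjustment (and the routine observation that the hard Gap-Hamming distribution satisfies $|\x\cdot\y|\le c_2\sqrt{n}$ with high probability, so conditioning on the promise costs nothing asymptotically), your argument goes through.
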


Let $P=(p_1,\ldots,p_n)$ be a distribution. When we say that we \emph{approximately sample} from $P$, we mean that we output a sample from any distribution $\widetilde{P}=(\tilde{p}_1,\ldots,\tilde{p}_n)$ such that $\|\widetilde{P}-P\|_1:=\sum_i |p_i - \tilde{p}_i| \leq \varepsilon$ for some constant inaccuracy $\varepsilon$.

\begin{defn}[Distributed sampling problem]
\label{defn:DSP}
Assume that Alice has as input a function $f:\{0,1\}^n \rightarrow \{\pm 1\}$ and
Bob has as input a function $g:\{0,1\}^n \rightarrow \{\pm 1\}$.
Their goal is to approximately sample from the distribution defined by
\[
\Pr(y) := \left(\frac{1}{2^n}  \sum_{x\in\{0,1\}^n} f(x) g(x) (-1)^{x\cdot y} \right)^2.
\]
\end{defn}

\begin{prop}[Theorem 1 of \cite{montanaro2019quantum}] 
\label{prop:DSP}
There exists a universal constant $\varepsilon$ such that, for sufficiently large $n$, any 2-way classical communication protocol for Distributed Fourier Sampling with shared randomness and inaccuracy $\varepsilon$ must communicate at least $\Omega(2^n)$ bits.
\end{prop}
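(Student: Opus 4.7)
The plan is to reduce from a well-studied hard communication task to Distributed Fourier Sampling. The natural candidates are 2-party Set-Disjointness, Inner Product mod $2$, or Gap-Hamming, each with randomised communication complexity $\Omega(N)$ on $N$-bit inputs; taking $N = 2^n$ immediately produces the claimed $\Omega(2^n)$ bound provided the reduction has $O(1)$ overhead. More precisely, I would need (i) an efficient encoding of Boolean inputs $a,b \in \{0,1\}^N$ into functions $f_a, g_b : \{0,1\}^n \to \{\pm 1\}$ known to Alice and Bob respectively, and (ii) a decoder that, given $O(1)$ samples from the target distribution $\widehat{f_a g_b}(y)^2$, recovers the answer to the underlying hard problem with constant probability. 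Together these would convert a hypothetical protocol using $c$ bits for Distributed Fourier Sampling into one using $O(c)$ bits for the hard problem, forcing $c = \Omega(N) = \Omega(2^n)$.

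The natural encoding is $f_a(x) = (-1)^{a_x}$ and $g_b(x) = (-1)^{b_x}$, so that the Fourier coefficients $\widehat{f_a g_b}(y) = \frac{1}{N}\sum_x (-1)^{a_x + b_x + x\cdot y}$ encode character-weighted correlations between $a$ and $b$. A single sample $y$ drawn from $\widehat{f_a g_b}(y)^2$ therefore conveys nontrivial global information about the joint statistics of Alice's and Bob's bit strings. The key technical step is to choose the underlying problem $P$ and the embedding so that the induced target distributions for YES and NO instances are $\Omega(1)$ apart in $\ell_1$-distance, so that a constant number of $\varepsilon$-approximate samples suffices to distinguish the two cases with high probability, closing the reduction and contradicting the $\Omega(N)$ bound for $P$.

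The main obstacle is precisely this distinguishability gap. For inputs that differ only mildly (as in Gap-Hamming, where the two cases differ in $O(\sqrt{N})$ bits), the induced target distributions differ by only $O(N^{-1/2})$ in $\ell_1$-distance, requiring $\Omega(N)$ samples to separate and thus trivialising the reduction. One workaround is to use a promise version of the hard problem whose YES/NO instances are structurally very different in Fourier space: for instance, promising that $f_a g_b$ is either (close to) a character $\chi_c$, yielding a near point-mass target, or is a pseudorandom $\{\pm 1\}$-valued function, yielding a near-uniform target; here a single sample suffices. Alternatively, one may bypass the reduction route and give a direct information-theoretic or packing argument: since the family $\{\widehat{h}^2 : h \in \{\pm 1\}^{\{0,1\}^n}\}$ contains $2^{\Omega(2^n)}$ distributions that are pairwise $\Omega(\varepsilon)$-separated in $\ell_1$ (a \emph{metric entropy} estimate to be established), while a protocol with $c$ bits of communication can produce at most $2^{O(c)}$ output distributions per fixed random seed, one would conclude $c = \Omega(2^n)$. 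Establishing the required $\ell_1$-packing lower bound for the range of $h \mapsto \widehat{h}^2$ is the crux of this alternative route and, in my view, the most delicate point of either approach.
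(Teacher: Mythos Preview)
The paper does not prove this proposition at all; it is quoted as Theorem~1 of \cite{montanaro2019quantum} and used as a black box throughout Section~\ref{section:Lower bounds}. There is therefore no ``paper's own proof'' to compare against.

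As a sketch toward an independent proof, your plan is reasonable in outline but not closed. You correctly identify the crux: for a reduction with $O(1)$ sample overhead you need the YES/NO target distributions to be $\Omega(1)$-separated in $\ell_1$, and you correctly note that naive encodings of Gap-Hamming or Set-Disjointness fail this. Your first workaround---restricting to a promise where $fg$ is either a character (point-mass target) or Fourier-spread (near-uniform target)---is the right instinct and is essentially how the cited reference proceeds, but you still have to exhibit a hard communication problem whose YES/NO instances induce exactly this dichotomy for the \emph{product} $fg$ when $f$ and $g$ are held by different parties; this is where the real work lies and your sketch stops short of it. Your second workaround, the packing argument, is more fragile than you indicate: the map $h\mapsto \widehat{h}^2$ identifies $h$ with $-h$ and is highly non-expansive, so establishing a $2^{\Omega(2^n)}$-sized $\ell_1$-packing of its range on $\{\pm 1\}$-valued inputs is itself a nontrivial Fourier-analytic statement that you have not justified. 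In short, the proposal names the right obstacle and two plausible routes around it, but neither route is carried through, so what you have is a plan rather than a proof.
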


\subsection{Some background on quantum computing}
\label{subsection:Some background on quantum computing}

In this paper, we often use the Dirac notation to represent unit vectors. 
For a unit column vector $\v$, we sometimes write it as $\ket{\v}$. In the quantum case, it will be called a quantum state.
Its complex conjugate is denoted as $\bra{\v}$.
The standard basis of $\mathbb{C}^n$ will be denoted as $\ket{1}, \ket{2}, \cdots, \ket{n}$.
Let $\ket{\a}=\sum_i a_i \ket{i}$ be a quantum state. The coefficients $a_i$ must satisfy $\sum_{i = 1}^n |a_i|^2=1$. When we measure it in the standard basis, then we will observe $i$ with probability $|a_i|^2$. So measurement naturally corresponds to the operation of sampling. 
In this paper, the quantum notation $\ket{0}^n$ will also be used. 
This is the same as $\ket{1}$ in $\mathbb{C}^{2^n}$.
For more about quantum computation, we refer the reader to the book \cite{nielsen2010quantum}.

\section{The connection}
\label{section:the connection}

We below discuss the connection between quantum-inspired classical algorithms and communication complexity. Our main result is the following theorem, which shows that there is an efficient simulation of a quantum-inspired classical algorithm using a classical communication protocol.

\begin{thm}
\label{thm:connection1}

In the multi-player coordinator model, for each $i\in\{1,\ldots,k\}$, assume that player $\P_i$ holds a matrix $A^{(i)}\in \mathbb{R}^{\ell_i\times n}$ and a vector $\b^{(i)}\in \mathbb{R}^{m_i}$ with $m:=\sum_i \ell_i = \sum_i m_i$.
Assume that all entries are specified by $O(\log q)$ bits.
Let
\be
\label{thm:eq}
A=\begin{pmatrix}
A^{(1)} \\
\vdots \\
A^{(k)} \\
\end{pmatrix}_{m\times n}, 
\quad
\b=\begin{pmatrix}
\b^{(1)} \\
\vdots \\
\b^{(k)}
\end{pmatrix}_{m\times 1}.
\ee
Then we have the following:
\begin{itemize}
\item The coordinator $\C$ can use $SQ(A)$ $O(T)$ times, using $O((T+k) \log (qmn))$ bits of communication.
\item The coordinator $\C$ can use $SQ(\b)$ $O(T)$ times, using $O((T+k) \log(qm))$ bits of communication.
\end{itemize}

\end{thm}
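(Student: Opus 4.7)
My plan is to have the coordinator $\C$ orchestrate each SQ operation by (i) a one-time setup phase in which every player transmits the ``index/norm metadata'' of their block, and (ii) a per-call phase in which $\C$ routes the operation to a single relevant player, exactly mimicking the two-player sketch given in the introduction but now iterated across $k$ blocks. The setup will cost $O(k\log(qmn))$ bits (charged once), and each of the $T$ subsequent uses of $SQ(A)$ or $SQ(\b)$ will cost only $O(\log(qmn))$ bits, giving the claimed $O((T+k)\log(qmn))$ bound.

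\textbf{Simulating $SQ(\b)$.} In the setup phase, each player $\P_i$ sends the coordinator the dimension $m_i$ and the squared norm $\|\b^{(i)}\|^2$; since entries of $\b^{(i)}$ are specified by $O(\log q)$ bits, the squared norm fits in $O(\log(qm))$ bits, and the whole setup is $O(k\log(qm))$ bits. From $m_1,\ldots,m_k$ the coordinator knows the partition of $[m]$ into blocks and can compute $\|\b\|^2=\sum_i\|\b^{(i)}\|^2$. To answer a query for the $j$th entry of $\b$, $\C$ identifies the unique player $i$ whose block contains index $j$, forwards the local index, and receives back $b_j$, costing $O(\log(qm))$ bits. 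To sample an index distributed as $|b_j|^2/\|\b\|^2$, $\C$ first samples a block $i\in[k]$ with probability $\|\b^{(i)}\|^2/\|\b\|^2$ (using only its local knowledge), asks $\P_i$ for an $\ell_2$-sample from $\b^{(i)}$, and translates the returned local index into a global index; this is again $O(\log(qm))$ bits. Querying $\|\b\|$ requires no communication.

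\textbf{Simulating $SQ(A)$.} We handle the two components of Definition \ref{defn:Sampling and query access to a matrix} separately. For any row index $i\in[m]$, the row $A_{i*}$ is held entirely by a unique player $\P_{i'}$, so every query to $SQ(A_{i*})$ (entry, norm, or $\ell_2$-sample) is routed by $\C$ directly to $\P_{i'}$, costing $O(\log(qmn))$ bits per call. For the vector $\a=(\|A_{1*}\|,\ldots,\|A_{m*}\|)$, note that $\a$ decomposes blockwise as $\a=(\a^{(1)},\ldots,\a^{(k)})$ where $\a^{(i)}$ lists the row norms of $A^{(i)}$; since $\P_i$ has $SQ(A^{(i)})$ it already has $SQ(\a^{(i)})$ by Definition \ref{defn:Sampling and query access to a matrix}. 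Therefore, in the setup, each $\P_i$ additionally sends $\ell_i$ and $\|\a^{(i)}\|^2=\|A^{(i)}\|_\F^2$ (at most $O(\log(qmn))$ bits), after which $\C$ simulates $SQ(\a)$ by exactly the same two-level block-then-local scheme used for $\b$.

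\textbf{Cost accounting and the main obstacle.} Summing contributions, the one-time setup is $O(k\log(qmn))$ and each of the $O(T)$ uses of $SQ(A)$ or $SQ(\b)$ adds $O(\log(qmn))$ or $O(\log(qm))$, giving the two bounds in the theorem. The only non-routine point, which I expect to be where care is required, is the handling of $SQ(\a)$: one must be sure that each player's local information really suffices to simulate $SQ(\a^{(i)})$ without extra communication per call, and that squared norms (which are what $\C$ needs for block sampling) are exactly representable in $O(\log(qmn))$ bits so that the block-selection distribution is implemented exactly rather than approximately. Both facts follow from the assumption that entries are specified by $O(\log q)$ bits and from the recursive definition of $SQ$ for matrices, so the simulation is exact and the bit-count bound is as stated.
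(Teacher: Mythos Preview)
Your proposal is correct and follows essentially the same approach as the paper: a one-time setup in which each player sends block dimensions and (squared) norms, followed by a two-level ``choose a block, then delegate locally'' routine for each sample or query, with $SQ(\a)$ handled by the same scheme as $SQ(\b)$. The paper's proof is organized slightly differently (it treats $\b$ first and then argues $SQ(A)$ analogously) but the protocol and cost analysis are the same as yours.
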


\begin{proof}
We will prove the second claim first.
For the coordinator $\C$ to use $SQ(\b)$, i.e., sample from and query for $\b$, the player $\P_i$ first sends $\|\b^{(i)}\|$ to $\C$. This requires $O(k\log (qm))$ bits of communication. Here $O(\log(qm))$ is the number of bits used to specify $\|\b^{(i)}\|$.
Notice that this only needs to be done once for the whole protocol, rather than once for each access.
At this point, the coordinator knows $\|\b\|$, and also can sample from the distribution
\[
\D_{\b} (i) = \frac{\|\b^{(i)}\|^2}{\|\b\|^2},
\quad i\in\{1,2,\ldots,k\}.
\]
To sample from $\b$, the coordinator first uses the distribution $\D_{\b}$ to obtain a random index $i$, then communicates with $\P_i$ and asks $\P_i$ to sample from the distribution defined by $\b^{(i)}$ under the $\ell_2$-norm. For an arbitrary index $j \in [m]$, suppose the $j$'th entry of $\b$ is held by player $i$, and say {\color{black}$b^{(i)}_k = b_j$}. Then it is easy to see that
\bes
\Pr(j) = \frac{\|\b^{(i)}\|^2}{\|\b\|^2} \frac{|b^{(i)}_k|^2}{\|\b^{(i)}\|^2}
=\frac{|b_j|^2}{\|\b\|^2},
\ees
which is the same as the distribution defined in $SQ(\b)$. 

To query the $j$-th entry of $\b$, the coordinator can communicate with the player who holds index $j$. To this end, player $\P_i$ only needs to send $m_i$ to $\C$, which costs $O(\log m)$ bits of communication. 
So in total $O(k\log m)$ bits of communication. In conclusion, for the coordinator $\C$ to use $SQ(\b)$ $O(T)$ times, they use $O(T\log (qm)+k\log (qm) )$ bits of communication. Here $O(T \log(qm))$ counts the total number of bits of receiving indices or entries from all players, and $O(k \log(qm))$ counts the total number of bits of receiving the norm $\|\b^{(i)}\|$ and $m_i$ from player $\P_i$ for all $i$.

Similarly, for the coordinator to use $SQ(A)$, the player $\P_i$ first sends $\|A^{(i)}\|_{\F}$ to the coordinator $\C$. This costs $O(k\log (qmn))$ bits of communication, where $O(\log(qmn) )$ is the number of bits used to specify $\|A^{(i)}\|_\F$. The coordinator now has access to the distribution
\bes
\D_A(i) = \frac{\|A^{(i)}\|_{\F}^2}{\|A\|_{\F}^2}, \quad i\in\{1,2,\ldots,k\}.
\ees
 As in Definition \ref{defn:Sampling and query access to a matrix}, let $\a=(\|A_{1*}\|,\cdots,\|A_{m*}\|)$, where $A_{i*}$ is the $i$'th row of $A$. The coordinator can use $SQ(\a)$ in a similar way to $SQ(\b)$. Now each entry is specified by $O(\log (qn))$ bits. To use $SQ(A_{i*})$, the coordinator only needs to communicate with the player who has the row index $i$. In summary, for the coordinator to use $SQ(A)$ $O(T)$ times, they use $O((T+k) \log(qmn))$ bits of communication.
\end{proof}

Recall that a quantum-inspired classical algorithm of complexity $O(T)$ means an algorithm that uses $O(T)$ applications of the data structure ``SQ'' and an arbitrary number of other arithmetic operations. As a result of Theorem \ref{thm:connection1}, if there is a quantum-inspired classical algorithm of complexity $O(T)$, then there is a protocol for the same problem with communication complexity $O((T+k) \log(qmn))$. So we can use this to prove lower bounds of $T$.
To obtain a non-trivial lower bound from this result, it is useful to consider the regime where $k$ and $q$ are small. This is usually easy to enforce in the multi-player model {\color{black} when $k\ll T$. Indeed, for practical applications below, we can design appropriate models such that $q=O(m+n)$ and $k, \log(q) \ll T$. Thus the communication complexity is mainly dominated by $T$.}

\begin{rem} \label{remark}
{\rm    
In Theorem \ref{thm:connection1}, it is possible that $A^{(i)}$ or $\b^{(i)}$ is empty or public for some $i$.
The above argument still works.
A very special case is that the whole matrix $A$ or the whole vector $\b$ belongs to a particular player, say player $i$. In this case, for the coordinator to use $SQ(A)$ or $SQ(\b)$, it suffices to communicate with player $i$. 
}
\end{rem}

Note that there are already many efficient quantum-inspired classical algorithms for a wide range of problems, e.g., see \cite{bakshi2023improved,chia2022sampling}, so we can use these algorithms to construct efficient communication protocols for the same problems based on Theorem \ref{thm:connection1}. One big advantage of this is that when studying communication complexity, the assumptions on quantum-inspired classical algorithms can be greatly relaxed. More precisely, one costly step of quantum-inspired classical algorithms is the preprocessing step of building the data structure ``SQ'' for the input matrices or vectors. However, in communication complexity, local costs are not considered since the individual players are considered to be all-powerful. So each player can build the data structure ``SQ'' to their own inputs independently, and this cost will not be counted when analyzing the communication complexity (note that the only power that we needed the players to have in the proof of Theorem~\ref{thm:connection1} was SQ access to their own inputs). 
Namely, the preprocessing step can be resolved easily in the model of communication complexity.
This also means that quantum-inspired classical algorithms can play important roles in distributed computation.
To support this point, we introduce another connection in Appendix \ref{section:another connection}, which we expect to be more useful for applications. However, we will not consider specific applications of this in the current paper.

\section{Lower bounds for quantum-inspired classical algorithms}
\label{section:Lower bounds}

In this section, we use Theorem \ref{thm:connection1} to prove lower bounds for quantum-inspired classical algorithms for five problems: linear regression, supervised clustering, principal component analysis, recommendation systems, and Hamiltonian simulation.  Below, the statements of these problems come from \cite{chia2022sampling}.

\begin{prob*}[Linear regression]
    Let $A$ be a matrix and $\b$ be a vector, given $SQ(A)$ and $SQ(\b)$, output $SQ(\tilde{\x}_*)$ such that $\|\tilde{\x}_*-\x_*\|\leq \varepsilon \|\x_*\|$, where $\x_*=A^+\b$.
\end{prob*}

From \cite{bakshi2023improved}, the complexity of the best quantum-inspired classical algorithm known so far for linear regression is $\widetilde{O}(\kappa_\F^4 \kappa^{10}/\varepsilon^2 \gamma^2)$, where $\kappa_\F, \kappa, \gamma$ are defined as follows:
Let $\sigma_{\min}$ be the minimal nonzero singular value of $A$, then
\be
\label{some notation}
\kappa_{\F}:=\|A\|_\F/\sigma_{\min}, \quad 
\kappa := \|A\|/\sigma_{\min}, \quad 
\gamma := \|A\x_*\|/\|\b\|.
\ee
These are the scaled condition number, condition number and the overlap of $\b$ is the column space of $A$, respectively.
In the special case when $A$ is row-sparse, the complexity is only $O(s\kappa_\F^2\log(1/\varepsilon))$ when $\gamma=\Theta(1)$ \cite{shao2022faster}, where $s$ is the row sparsity (i.e., the maximal number of nonzero entries of each row). Thus there are five parameters we have to consider in the lower bounds analysis, i.e., $\kappa_\F, \kappa, \varepsilon, \gamma, s$. 
In the lower bounds analysis below, we will mainly focus on the analysis with respect to $\kappa_\F$, which relates the rank of $A$ and is usually the dominating term in the complexity. Indeed, if the rank of $A$ is $r$, then it is obvious that $\kappa_\F \geq \sqrt{r}$ because {\color{black}$\kappa_\F^2=\|A\|_\F^2/\sigma_{\min}^2$ and $\|A\|_\F^2$} equals the square sum of all singular values. So $\kappa_\F$ can be very large even if the matrix is well-conditioned.

With Theorem \ref{thm:connection1}, it suffices to give lower bounds on the communication complexity of linear regressions where $A,\b$ are defined in the form of (\ref{thm:eq}) in the coordinator model.
As our main focus will be on the dependence on $\kappa_\F$, in our reductions we try to ensure that all the other parameters $\kappa, \varepsilon, \gamma, s$ are $O(1)$, to obtain non-trivial lower bounds in terms of $\kappa_\F$. 
We use the notation $\widetilde{\Omega}(\cdot)$ to hide all factors that are polylogarithmic in the input size. Specifically, we will ignore the logarithmic factors in Theorem~\ref{thm:connection1} for the sake of readability. Below, we focus on the sampling task in SQ in two cases: row-sparse and dense cases.

\begin{prop}[Sampling in the row-sparse case]
Assume that $A$ is row sparse and $\varepsilon\in(0,1)$ is a constant. Then $\varepsilon$-approximately sampling from $A^+\b$ requires making $\widetilde{\Omega}((\kappa^2+\kappa_\F)/\gamma)$ calls to $SQ(A), SQ(\b)$.
\end{prop}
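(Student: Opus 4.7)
The plan is to combine Theorem~\ref{thm:connection1} with known communication complexity lower bounds. Concretely, any quantum-inspired classical algorithm making $T$ calls to $SQ(A)$ and $SQ(\b)$ induces, via Theorem~\ref{thm:connection1}, a classical protocol of cost $\widetilde O(T)$ for a distributed version of the row-sparse linear regression problem in the coordinator model with inputs split as in (\ref{thm:eq}). So it suffices to exhibit distributed input distributions on which (i)~$A$ is row-sparse with $\kappa$, $\gamma$ (and the sparsity $s$) bounded by constants and $\kappa_\F$ of the desired magnitude, and (ii)~any $\varepsilon$-approximate sample from $\x_*=A^+\b$ solves a communication problem with large randomised lower bound.

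For the $\widetilde\Omega(\kappa_\F/\gamma)$ term, I would reduce from $k$-player Set-Disjointness (Definition~\ref{defn:k-DISJ} and Proposition~\ref{prop:k-DISJ}), which is $\Omega(kn)$-hard even under the promise that every $T_i$ has Hamming weight $\Theta(n)$ and there is at most one collision. Given inputs $T_1,\ldots,T_k\in\{0,1\}^n$, let player $i$'s block $A^{(i)}$ be a vertical stack of $1$-sparse rows, one row $\e_\ell$ for each $\ell$ with $T_{i\ell}=1$, together with a suitable constant-size ``regularising'' block that forces $\sigma_{\min}$ and $\|A\|$ to be $\Theta(1)$. A suitably chosen $\b$ (also split among the players) is arranged so that, when $T_1$ collides with some $T_j$ at a unique coordinate $\ell^\ast$, the solution $\x_*=A^+\b$ has an $\Omega(1)$ fraction of its $\ell_2$-mass on coordinate $\ell^\ast$, while in the disjoint case this mass is spread out. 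Then a constant number of $\varepsilon$-approximate samples from $\x_\ast$ distinguishes the two cases with probability $\geq 2/3$. Because $\|A\|_\F^2 = \Theta(kn)$ and $\sigma_{\min}=\Theta(1)$, we have $\kappa_\F=\Theta(\sqrt{kn})$; combined with $\gamma=\Theta(1)$ this converts the $\Omega(kn)$ communication lower bound (divided by the $\polylog$ factor from Theorem~\ref{thm:connection1}) into $T=\widetilde\Omega(\kappa_\F)$. The general dependence on $\gamma$ is recovered by rescaling $\b$ so that only a $\gamma$-fraction lies in the column space of $A$.

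For the $\widetilde\Omega(\kappa^2/\gamma)$ term, the template is the same but the hard problem is now $k$-Gap-Hamming (Definition~\ref{defn:k-gap} and Proposition~\ref{prop:k-gap}), whose $\Omega(kn)$ lower bound is what drives the quadratic-in-$\kappa$ dependence; here I would encode the $\{\pm 1\}$-strings $T_i$ into a row-sparse $A$ and $\b$ so that the sign of a designated coordinate of $\x_*$ tracks the sign of $\sum_i T_i\cdot T_{k+1}$. To amplify this signal to within $\varepsilon$-variation-distance of the target distribution one has to make the relevant coordinate of $\x_*$ scale like $\sigma_{\min}^{-1}$, which is precisely where the extra factor of $\kappa$ appears (an $\ell_2$-sample sees mass quadratically, giving $\kappa^2$).

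I expect the main obstacle to be the simultaneous control of all the parameters $\kappa$, $\kappa_\F$, $\gamma$, $s$ and $\varepsilon$: the naive reductions easily blow up $\kappa_\F$ or drive $\gamma$ down to $1/\poly(n)$, making the resulting lower bound trivial in the parameter of interest. The delicate accounting is to show that a constant fraction of $\|\x_*\|^2$ sits on the ``signal'' coordinate in spite of noise contributed by the other $k-1$ players' blocks, so that an $O(1)$-number of $\varepsilon$-approximate samples from $\x_*$ suffices; without such concentration the $\varepsilon$-sampling relaxation (rather than exact sampling) could evade the reduction.
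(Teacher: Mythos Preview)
Your high-level strategy---reduce from a hard communication problem and invoke Theorem~\ref{thm:connection1}---matches the paper, and $k$-player Set-Disjointness is indeed the source. But the specific reductions you sketch diverge from the paper's proof and contain genuine gaps.

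The paper derives \emph{both} the $\widetilde\Omega(\kappa^2/\gamma)$ and the $\widetilde\Omega(\kappa_\F/\gamma)$ terms from a \emph{single} Set-Disjointness instance, not from two different problems. Its matrix is $kn\times k$ and block-diagonal: column $j$ is (a scaling of) the unit vector $\ket{\t_j}$ encoding player $j$'s set, and $\b$ carries $\ket{\t_1}$ in each block. Then $\x_*\in\mathbb{R}^k$ has $j$-th coordinate proportional to $\langle\t_1|\t_j\rangle$, so one sample reveals a collision. A single free scale $\beta_A$ is then set to $\sqrt{k}$ or to $1$ to produce the two parameter regimes. Crucially, $\gamma$ is \emph{not} held at $\Theta(1)$ in either regime: it is $\Theta(1/n)$ in the first and $\Theta(1/(\sqrt{k}\,n))$ in the second, and this is exactly where the $1/\gamma$ factor comes from---not from padding $\b$ afterwards.

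Your own reduction is underspecified at the decisive points. With rows $\e_\ell$ stacked per player, $A^\T A$ is diagonal with entries equal to the multiplicity $m_\ell$ of each coordinate across the $T_i$, so the obvious choices of $\b$ either make $\x_*$ independent of the instance (public $\b$ in the column space gives $(\x_*)_\ell=c_\ell$) or make the collision coordinate \emph{smaller}, not larger (e.g.\ $\b$ supported on player~1's rows gives $(\x_*)_\ell=T_{1\ell}/m_\ell$). You also do not explain how $\sigma_{\min}$ and $\|A\|$ stay $\Theta(1)$ when the $T_j$ for $j\ge 2$ may overlap arbitrarily, nor how the ``regularising block'' interacts with the signal. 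Separately, your arithmetic slips: with $\kappa_\F=\Theta(\sqrt{kn})$ and $\gamma=\Theta(1)$ the $\Omega(kn)$ bound would read $T=\widetilde\Omega(\kappa_\F^2)$, not $\widetilde\Omega(\kappa_\F)$. Finally, the Gap-Hamming route for the $\kappa^2$ term is dubious for a \emph{sampling} task: an $\ell_2$-sample from $\x_*$ sees only $|x_j|^2$ and is blind to signs, so ``tracking the sign of $\sum_i T_i\cdot T_{k+1}$'' via samples needs an additional mechanism you have not supplied. The paper avoids all of this by using the same Set-Disjointness construction twice with different scalings.
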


\begin{proof} We give a reduction from the $k$-player Set-Disjointness problem, defined in Definition~\ref{defn:k-DISJ}. Suppose the $k$ players $\P_1, \dots, \P_k$ are given inputs to the Set-Disjointness problem (using the notation given in Definition \ref{defn:k-DISJ}).
Player $\P_j$ constructs the following vector:
\bes
\t_j = \sum_{\ell=1}^n T_{j \ell} \ket{\ell}, \quad
\ket{\t_j} = \t_j/\alpha_j,
\ees
where $\alpha_j=\|\t_j\|= \Theta(\sqrt{n})$ for all $j$.
Here we used the ket notation $\{\ket{1},\ldots,\ket{n}\}$ to represent the standard basis of $\mathbb{C}^n$, see Subsection \ref{subsection:Some background on quantum computing}. So $\ket{\t_j}$ is a unit vector.

Let $\beta_A,\beta_b$ be parameters that will be fixed later. 
Let 
\beas
A = \beta_A \ket{1}_k \otimes \ket{1}_n \bra{1}_k + \sum_{j=2}^k \ket{j}_k  \otimes \ket{\t_j} \bra{j}_k , \quad
\b =   \beta_b \ket{1}_k \otimes \ket{1}_n + n\sum_{j=2}^k \ket{j}_k \otimes \ket{\t_1},
\eeas
where the subindex $k$ or $n$ refers to the dimension of base vectors.
In matrix form,
\[
A = \begin{pmatrix}
\beta_A \ket{1}_n & \\
& \ket{\t_2} & \\
& & \ddots \\
& & & \ket{\t_k} \\
\end{pmatrix}_{kn \times k}, 
\quad
\b = 
\begin{pmatrix}
\beta_b \ket{1}_n  \\
n \ket{\t_1}  \\
\vdots \\
n \ket{\t_1} \\
\end{pmatrix}_{kn \times 1}.
\]
In this construction, the first-row block of $A$ and $\b$ are public for all players. The second-row block of $A$ belongs to player 2, and the last-row block of $A$ belongs to player $k$. The whole vector of $\b$ except the first-row block belongs to player 1. This satisfies the conditions made in Theorem \ref{thm:connection1}, also see Remark \ref{remark}.

For the above matrix, the row sparsity is 1,
and it is easy to check that
\[
A^\T A = \beta_A^2\ket{1}_k \bra{1}_k + \sum_{j=2}^k \ket{j}_k \bra{j}_k,
\quad 
A^\T\b = \beta_A\beta_b \ket{1}_k + n \sum_{j=2}^k \langle \t_1|\t_j\rangle \ket{j}_k.
\]
The optimal solution of the linear regression $\arg\min\|A\x-\b\|$ is
\[
\x_* = (A^\T A)^{-1} A^\T\b = \frac{\beta_b}{\beta_A}\ket{1}_k + n\sum_{j=2}^k \langle \t_1|\t_j\rangle \ket{j}_k.
\]
If there is no intersection for the $k$-player Set-Disjointness problem, then $\x_* = \frac{\beta_b}{\beta_A}\ket{1}_k$. Measuring this state only returns index 1 with probability 1. If there is an intersection with one common index, then there is a $j\in\{2,\ldots,k\}$ such that
\[
\x_* = \frac{\beta_b}{\beta_A}\ket{1}_k + \frac{n}{\alpha_1 \alpha_j} \ket{j}_k.
\]
Note that $\alpha_1 \alpha_j = \Theta(n)$, so if we choose $\beta_A=\Theta(\beta_b)$, then with a constant probability we will see index $j$ by sampling from the distribution defined by $\x_*$. In conclusion, if we can sample from the solution, we then can solve the $k$-player Set-Disjointness problem using a constant number of such samples. 

In the remaining part of the proof we analyse the parameters of this problem and conclude the lower bound using Theorem~\ref{thm:connection1} and Proposition~\ref{prop:k-DISJ}.

The overlap of $\b$ in the column space of $A$ is
\[
\gamma^2 = \frac{\|A\x_*\|^2}{\|\b\|^2} = 
\frac{\beta_b^2 + n^2 \sum_{j=2}^k \langle \t_1|\t_j\rangle^2 }{\beta_b^2 + (k-1)n^2} =
\frac{\beta_b^2 + c }{\beta_b^2 + (k-1)n^2}
\]
for some constant $c=\Theta(1)$.
We also have
\[
\|A\|_\F^2 = \beta_A^2 + k-1, \quad 
\sigma_{\min} = \min\{ \beta_A, 1 \}.
\]
So 
\[
\kappa^2 = \frac{\max(\beta_A^2,1)}{\min(\beta_A^2,1)}, \quad
\kappa_\F^2 = \frac{\|A\|_\F^2}{\sigma_{\min}^2} = \frac{\beta_A^2 + k-1}{\min(\beta_A^2,1)}, \quad \gamma^2 = \frac{\beta_b^2+c}{\beta_b^2 + (k-1)n^2}.
\]

If we choose $\beta_A^2 = \beta_b^2 = k$, then
$\kappa^2, \kappa^2_\F = \Theta(k), \gamma^2 = \Theta(1/n^2)$. Now we cannot determine the dependence on $\kappa_\F$. But note that $\kappa_\F \geq \kappa$, so by Theorem \ref{thm:connection1} a lower bound we can claim is $\widetilde{\Omega}(\kappa^2/\gamma)$ because the communication complexity of the $k$-player Set-Disjointness problem is $\Theta(kn)$ by Proposition \ref{prop:k-DISJ}.
If we choose $\beta_A^2=\beta_b^2 = 1$, then $\kappa = 1, \kappa_\F^2 = \Theta(k), \gamma^2 = \Theta(1/kn^2)$, now we obtain a lower bound of $\widetilde{\Omega}(\kappa_\F/\gamma)$. 

One thing we did not analyse is the error. In the above constructions, we either have $\x_* = \ket{1}_k$ or $\x_* = \ket{1}_k + d \ket{j}_k$ with $d=\Theta(1)$ depending on the result of the Set-Disjointness problem. The error will not change the sampling result too much when the inaccuracy $\varepsilon=O(1)$ is a small constant because the two solutions are far from each other.
\end{proof}

\begin{prop}[Sampling in the dense case]
Assume that $\varepsilon\in(0,1)$ is a constant. Then $\varepsilon$-approximately sampling from $A^+\b$ requires making $\widetilde{\Omega}(\kappa_\F^2)$ calls to $SQ(A), SQ(\b)$.
\end{prop}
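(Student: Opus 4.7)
The plan is to reduce from the $k$-player Set-Disjointness problem (Proposition~\ref{prop:k-DISJ}), reusing exactly the same block-diagonal construction of $A \in \mathbb{R}^{kn \times k}$ and $\b \in \mathbb{R}^{kn}$ from the previous proposition, but with the rescaled parameters $\beta_A^2 = \beta_b^2 = 1/n$ in place of $\Theta(1)$ or $\Theta(k)$. This single change is designed to inflate $\sigma_{\min}^{-2}$, and hence $\kappa_\F^2$, by a factor of $n$, while preserving the ratio $\beta_b/\beta_A$ that controls the sampling signal.

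With this choice, the columns of $A$ still have pairwise disjoint supports, so $A^\T A = \diag(1/n,1,\ldots,1)$, and the singular values of $A$ are $1/\sqrt{n}, 1, \ldots, 1$. This gives $\|A\|_\F^2 = 1/n + (k-1) = \Theta(k)$, $\sigma_{\min}(A) = 1/\sqrt{n}$, and hence $\kappa_\F^2 = \|A\|_\F^2 / \sigma_{\min}^2 = \Theta(kn)$, as required. Because $\beta_b/\beta_A = 1$ is preserved under the rescaling, the solution $\x_* = (A^\T A)^{-1} A^\T \b$ retains exactly the form established in the previous proof: $\x_* = \ket{1}_k$ when the $T_j$'s are pairwise disjoint, and $\x_* = \ket{1}_k + \Theta(1)\,\ket{j^*}_k$ when players $1$ and $j^*$ share a common index, where the $\Theta(1)$ factor $n/(\alpha_1 \alpha_{j^*})$ uses $\alpha_1, \alpha_{j^*} = \Theta(\sqrt{n})$. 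Consequently, a constant number of $\varepsilon$-approximate samples from $\x_*$ is still enough to decide $k$-player Set-Disjointness with constant probability, exactly as before.

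Invoking Theorem~\ref{thm:connection1}, any quantum-inspired classical algorithm using $T$ calls to $SQ(A), SQ(\b)$ can be simulated by a classical communication protocol of cost $\widetilde O(T+k)$ bits, since $\log(qmn)$ is polylogarithmic in $k$ and $n$. Combined with the $\Omega(kn)$ lower bound of Proposition~\ref{prop:k-DISJ} together with the identity $\kappa_\F^2 = \Theta(kn)$, this forces $T = \widetilde\Omega(kn) = \widetilde\Omega(\kappa_\F^2)$, which is the claimed bound.

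The only step that genuinely needs care is checking that the rescaling does not spoil the sampling-based distinguisher; this comes for free because both the ratio $\beta_b/\beta_A$ and the $\Theta(1)$ magnitude of the intersection signal are invariant under the rescaling, so the distinguishing analysis of the previous proposition can be recycled verbatim. No explicit control of $\kappa$ or $\gamma$ is required, consistent with the proposition's focus on the $\kappa_\F$-dependence alone.
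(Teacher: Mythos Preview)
Your proof is correct for the formal statement of the proposition, but it takes a genuinely different route from the paper's.

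The paper reduces from the Distributed Fourier Sampling problem (Proposition~\ref{prop:DSP}) rather than from $k$-player Set-Disjointness. Alice holds the dense matrix $A := D_f H^{\otimes n}$ (a signed Hadamard matrix) and Bob holds the unit vector $\b := \ket{g}$; the target $(D_f H^{\otimes n})^{-1}\ket{g}$ is exactly the distribution in Definition~\ref{defn:DSP}. For this $A$ one has $\kappa = \gamma = 1$ and $\kappa_\F^2 = 2^n$, so the $\Omega(2^n)$ communication lower bound translates directly into $\widetilde{\Omega}(\kappa_\F^2)$ via Theorem~\ref{thm:connection1}.

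The key conceptual difference is that the paper's hard instance keeps $\kappa$ and $\gamma$ constant, so the lower bound genuinely isolates the $\kappa_\F$-dependence---this is the explicit methodological goal stated earlier in the section, and it is also why the proposition is labelled the ``dense case'' in contrast to the preceding row-sparse one. In your construction the matrix is still row-sparse (sparsity $1$), and both $\kappa^2 = n$ and $1/\gamma = \Theta(n\sqrt{k})$ blow up; so while the numerical bound $\Omega(kn)$ happens to equal $\kappa_\F^2$ for your parameter choice, the same hardness could just as well be attributed to $\kappa$ or $1/\gamma$, and the instance really belongs to the regime of the previous proposition. What your approach buys is economy (it recycles the earlier construction with a one-line reparametrisation), whereas the paper's approach buys a clean separation of parameters and a hard instance that is actually dense.
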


\begin{proof}
Here we give a reduction from the distributed sampling problem. We below use the notations stated in Definition \ref{defn:DSP}. Let $D_f =\sum_x f(x)  \ket{x}\bra{x} $ be the diagonal matrix such that the $x$-th diagonal entry is $f(x)$ and let $ \ket{g} = \frac{1}{\sqrt{2^n}} \sum_x g(x) \ket{x}$ be a unit column vector whose entries are $g(x)$, then the distribution in the distributed sampling problem is defined by the state $(D_f H^{\otimes n})^{-1} \ket{g}$, where 
$H=\frac{1}{\sqrt{2}} \begin{pmatrix}
    1 & 1 \\ 1 & -1
\end{pmatrix}$ is the Hadamard gate.
Now Alice has the whole matrix $A:=D_f H^{\otimes n}$ and Bob has the whole vector $\b:=\ket{g}$. This clearly satisfies the conditions in Theorem \ref{thm:connection1}.
It is also easy to check that
for matrix $D_f H^{\otimes n}$ we have $\kappa_\F^2 = 2^n$, $\kappa=\gamma=1$. So we obtain a lower bound of $\widetilde{\Omega}(\kappa_\F^2)$ by Proposition \ref{prop:DSP} and Theorem \ref{thm:connection1}.
\end{proof}

\begin{prob*}[Supervised clustering]
Let $\p, \q_1, \ldots,\q_n \in \mathbb{R}^d$ and $\varepsilon\in(0,1)$, denote
\[
A = \begin{pmatrix}
-\p/\|\p\| \\
\q_1/\|\q_1\|\sqrt{n} \\
\vdots \\
\q_n/\|\q_n\|\sqrt{n} 
\end{pmatrix}, \quad
\b = \begin{pmatrix}
\|\p\| \\
\|\q_1\|/\sqrt{n} \\
\vdots \\
\|\q_n\|/\sqrt{n} 
\end{pmatrix}.
\]
Given $SQ(A), SQ(\b)$, compute $\|\b^\T A\|^2 \pm \varepsilon$. 
\end{prob*}

If we denote $\q = \frac{1}{n} \sum_{i=1}^n \q_i$, then $\|\b^\T A\|^2=\|\p - \q\|^2$, which is the distance between $\p$ and the center of $\{\q_1,\ldots,\q_n\}$. This task is widely used in supervised cluster assignments. In this problem, we are given a vector $\p$ and some sets of vectors $Q_1, \ldots, Q_k$. The goal is to assign $\p$ to one of the sets. In supervised clustering, the criterion relies on the distance between $\p$ and the centers. We will assign $\p$ to $Q_i$ if the distance between $\p$ and the center of $Q_i$ is the smallest one.

For supervised clustering, by \cite[Corollary 6.10]{chia2022sampling}, there is a quantum-inspired classical algorithm with complexity $\widetilde{O}(\|A\|_\F^4\|\b\|^4\varepsilon^{-2})$. In the quantum case, by \cite{lloyd2013quantum}, there is a quantum algorithm of complexity $\widetilde{O}(\|A\|_\F^2\|\b\|^2\varepsilon^{-1})$. Below we show a near-tight lower bound of quantum-inspired classical algorithms, which implies that the quantum-classical separation for supervised clustering is quadratic.

\begin{prop}
For supervised clustering, any quantum-inspired classical algorithms must use $\widetilde{\Omega}(\|A\|_\F^4\|\b\|^4\varepsilon^{-1} + \varepsilon^{-2})$ applications of $SQ(A), SQ(\b)$.
\end{prop}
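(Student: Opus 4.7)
The plan is to combine two independent reductions, each invoking Theorem \ref{thm:connection1} to translate a communication complexity lower bound into a lower bound on the number of SQ accesses used by any QIC algorithm. Because the row normalization in the problem statement forces $\|A\|_{\F} = \sqrt{2}$, the statement reduces to proving $\widetilde{\Omega}(\|\b\|^4/\varepsilon + \varepsilon^{-2})$; one reduction (from $k$-Gap-Hamming) yields the first term, and the other (from 2-player Gap-Hamming) yields the second.

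For the $\|\b\|^4/\varepsilon$ term I would reduce from the $k$-Gap-Hamming problem (Proposition \ref{prop:k-gap}), in which $k+1$ players hold $T_1, \ldots, T_{k+1} \in \{\pm 1\}^N$ and must distinguish $\mathrm{IP} := \sum_{i=1}^k T_i \cdot T_{k+1} \geq c_1\sqrt{N}$ from $\mathrm{IP} \leq -c_1\sqrt{N}$, at cost $\Omega(kN)$ bits. Assign $\p = c T_{k+1}$ to player $\P_{k+1}$ and $\q_i = c T_i$ to player $\P_i$ for $i\in\{1,\dots,k\}$ (so the clustering instance has $m=k$ vectors), where $c$ is a scaling parameter to be chosen. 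Using the promise $\sum_{i=1}^k T_i \in \{\pm 1\}^N$, a short calculation gives
\[
\|\b^{\T}A\|^2 = \|\p - \bar\q\|^2 = c^2 N + \frac{c^2 N}{k^2} - \frac{2c^2}{k}\,\mathrm{IP},
\]
so the two Gap-Hamming cases differ in $\|\b^{\T}A\|^2$ by $4 c_1 c^2 \sqrt{N}/k$, and an $\varepsilon$-additive estimate of this quantity distinguishes them whenever $k \lesssim c^2\sqrt{N}/\varepsilon$. Choosing $c^2 = 1/\sqrt{N}$ and $k = \Theta(1/\varepsilon)$ balances both constraints: then $\|\b\|^2 = 2 c^2 N = 2\sqrt{N}$, so $N = \Theta(\|\b\|^4)$, and Theorem \ref{thm:connection1} converts the $\Omega(kN) = \Omega(\|\b\|^4/\varepsilon)$ communication lower bound into the desired SQ lower bound after absorbing logarithmic factors.

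For the $\varepsilon^{-2}$ term I would use a reduction from the 2-player Gap-Hamming problem, whose complexity is $\Theta(\min(N, 1/\varepsilon'^2))$. Let Alice hold $\x \in \{\pm 1\}^N$ and Bob hold $\y \in \{\pm 1\}^N$; set $\p = \x/\sqrt{N}$ and $\q_1 = \y/\sqrt{N}$, so $m = 1$ and $\|\b\|^2 = 2$ is a fixed constant. A quick computation gives $\|\p - \q_1\|^2 = 4\Delta(\x,\y)/N$, so an $\varepsilon$-additive estimate of $\|\b^{\T}A\|^2$ distinguishes Gap-Hamming instances of gap $\Theta(\varepsilon)$; taking $N = \Theta(1/\varepsilon^2)$ and applying Theorem \ref{thm:connection1} yields the $\widetilde{\Omega}(1/\varepsilon^2)$ lower bound.

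The main obstacle will be identifying the correct scaling $c$ in the first reduction: the unscaled choice $c = 1$ falls short by a factor of $\sqrt{N}$, and only the value $c^2 = 1/\sqrt{N}$ aligns $\Omega(kN)$ with $\Omega(\|\b\|^4/\varepsilon)$. The remaining checks are routine: verifying that the $k$-Gap-Hamming promises ($\sum_i T_i \in \{\pm 1\}^N$ and the bound on $\mathrm{IP}$) are consistent with the hard distinguishing instances used, confirming that the additive $O(k\log(qmn))$ overhead in Theorem \ref{thm:connection1} is dominated by the main $O(T\log(qmn))$ term when $k = \Theta(1/\varepsilon) \ll T$, and ensuring that entries of $\p, \q_i$ can be specified by $O(\log N)$ bits so the ``$q$'' factor in Theorem \ref{thm:connection1} remains polynomial.
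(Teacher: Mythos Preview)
Your proposal is correct and follows essentially the same approach as the paper: both terms are obtained by the same two reductions (multi-player Gap-Hamming with the scaling $c^2 = 1/\sqrt{N}$ for the $\|A\|_\F^4\|\b\|^4\varepsilon^{-1}$ term, and 2-player Gap-Hamming for the $\varepsilon^{-2}$ term), with the same parameter choices and the same invocation of Theorem~\ref{thm:connection1}. Your computation $\|\p-\bar\q\|^2 = c^2 N + c^2 N/k^2 - (2c^2/k)\,\mathrm{IP}$ is in fact slightly more careful than the paper's, which drops the $1/k^2$ term; but this does not affect the gap between the two cases, so the argument is the same.
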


\begin{proof}
We obtain the first term in the complexity via a reduction from the $n$-Gap-Hamming problem, see Definition \ref{defn:k-gap}. Let $\x,\y_i\in \{\pm 1\}^d$ and $\alpha$ be a scaling parameter, which will be determined later. Let $\p=\alpha \x, \q_i = \alpha \y_i$. 
This construction coincides with the setting stated in Theorem~\ref{thm:connection1}.
Now we can compute that $\|A\|_\F^2=2, \|\b\|^2=2\alpha^2 d$. Moreover, $\|\p-\q\|_2^2 = 2\alpha^2 d - \frac{2\alpha^2}{n} \sum_{i=1}^n \x \cdot \y_i$. In the $n$-Gap-Hamming problem, the promise is that $\sum_{i=1}^n \x \cdot \y_i \geq \sqrt{d}$ or $\leq -\sqrt{d}$. To distinguish them it suffices to choose $\varepsilon = \alpha^2 \sqrt{d}/n$. We now set $\alpha$ such that $\alpha^2 \sqrt{d} = 1$, then $\varepsilon=1/n$ and $\|\b\|^2=2\sqrt{d}$.
By Proposition~\ref{prop:k-gap}, the lower bound of the $n$-Gap-Hamming problem is $\Omega(nd) = \Omega(\|A\|_\F^4\|\b\|^4\varepsilon^{-1})$, so we obtain the first part of the claimed lower bound using Theorem~\ref{thm:connection1}.

In particular, when $n=1$, the reduction is from the (2-player) Gap-Hamming problem, see Equation (\ref{defn:gap-hamming}). We now set $\p=\x/\sqrt{d}, \q=\y/\sqrt{d}$ as unit vectors. Then
$\|\p-\q\|_2^2 = \frac{4}{d}\Delta(\x,\y)$. If we can approximate this quantity up to additive error $\varepsilon $, we then can solve the Gap-Hamming problem. So we obtain a lower bound of $\widetilde{\Omega}(\varepsilon^{-2})$.
\end{proof}

\begin{prob*}[Principal component analysis]
\label{prob: Principal component analysis}
Let $A\in \mathbb{C}^{m\times n}$ be a matrix such that $A^\dag A$ has top $k$ eigenvalues $\{\lambda_i:i\in[k]\}$ and corresponding eigenvectors $\{\v_i:i\in[k]\}$. Given $SQ(A)$, with probability $\geq 1-\delta$, compute eigenvalues $\{\tilde{\lambda}_i:i\in[k]\}$ 
such that $\sum_{i\in[k]} |\tilde{\lambda}_i - \lambda_i|\leq \varepsilon \|A\|_\F^2$ and eigenvectors $\{SQ(\title{\v}_i):i\in[k]\}$ such that $\|\v_i - \tilde{\v}_i\| \leq \varepsilon$.
\end{prob*}

About this problem, by \cite[Corollary 6.12]{chia2022sampling} there is a quantum-inspired classical algorithm whose complexity is $\widetilde{O}( \frac{\|A\|_{\F}^6}{\lambda_k^2 \|A\|^2} \eta^{-6} \varepsilon^{-6} )$, where $\eta = \min_{i\in[k]} |\lambda_i-\lambda_{i+1}|/\|A\|^2$. In particular, when $k=1$, the complexity becomes $\widetilde{O}( \frac{\|A\|_{\F}^6\|A\|^6} {|\lambda_1-\lambda_2|^6}\varepsilon^{-6} )$. In comparison, the quantum complexity is $\widetilde{O}(\|A\|_\F/\varepsilon)$ when $k=1$ \cite{chakraborty_et_al:LIPIcs:2019:10609}. In this particular case, we can prove the following lower bound for quantum-inspired classical algorithms. It indicates the quantum-classical separation is at least quadratic.

\begin{prop}
\label{prop:PCA1}
Let $A$ be a matrix with maximal singular value $\sigma$ and corresponding right singular vector $\v$, then computing $\sigma\pm \varepsilon$ or $\varepsilon$-approximately sampling from $\v$ requires $\widetilde{\Omega}(\|A\|_\F^2)$ applications of $SQ(A)$.
\end{prop}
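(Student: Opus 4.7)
My plan is to reduce from the $2$-player Set-Disjointness problem (Proposition~\ref{prop:k-DISJ} with $k=2$), under which Alice holding $\x\in\{0,1\}^n$ and Bob holding $\y\in\{0,1\}^n$ need $\Omega(n)$ bits of communication to decide whether $\supp(\x)\cap\supp(\y)=\emptyset$ or $|\supp(\x)\cap\supp(\y)|=1$, even under the promise $|\x|,|\y|=\Theta(n)$. I will encode this instance into a matrix $A$ so that its top singular value and its top right singular vector each reveal the answer, while $\|A\|_\F^2=\Theta(n)$. Theorem~\ref{thm:connection1} then converts the $\Omega(n)$ communication lower bound into an $\widetilde{\Omega}(\|A\|_\F^2)$ lower bound on the number of SQ accesses, proving the claim.

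The construction I have in mind is a ``diagonal stack''. Let Alice's block be $A^{(1)}=\diag(\x)\in\mathbb{R}^{n\times n}$ and Bob's block be $A^{(2)}=\diag(\y)\in\mathbb{R}^{n\times n}$, and set $A=\binom{A^{(1)}}{A^{(2)}}\in\mathbb{R}^{2n\times n}$, which fits the row-split setup of Theorem~\ref{thm:connection1}. A direct calculation gives $A^\T A=\diag(\x)^2+\diag(\y)^2=\diag(\x+\y)$, so the eigenvalues of $A^\T A$ are exactly the numbers $x_i+y_i\in\{0,1,2\}$ and $\|A\|_\F^2=|\x|+|\y|=\Theta(n)$. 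In the disjoint case the top eigenvalue is $1$ and $\sigma=1$; in the intersecting case with unique common index $i^*$ it is $2$, $\sigma=\sqrt 2$, and the top right singular vector is $\v=\ket{i^*}$. Thus computing $\sigma\pm\varepsilon$ for any constant $\varepsilon<(\sqrt 2-1)/2$ distinguishes the two cases, and $\varepsilon$-approximately sampling from $\v$ does as well: in the YES case the sample equals $i^*$ with probability $\geq 1-\varepsilon$, while in the NO case the top eigenspace sits inside $\mathrm{span}\{\ket{i}:x_i+y_i=1\}$, so the sample $i$ satisfies $x_i+y_i=1$ with probability $\geq 1-\varepsilon$. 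After the coordinator obtains the sampled index $i$, two extra bits (Alice sending $x_i$ and Bob sending $y_i$) let it output YES iff $x_i=y_i=1$.

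The subtle point I have to be careful about is that in the NO case the top eigenvalue $1$ has multiplicity $|\supp(\x)\triangle\supp(\y)|=\Theta(n)$, so ``the'' top right singular vector is not uniquely defined. I plan to break this degeneracy with a tiny perturbation, e.g.\ setting $A^{(1)}_{ii}=x_i\sqrt{1+i\delta}$ and $A^{(2)}_{ii}=y_i\sqrt{1+i\delta}$ for some $\delta=o(1/n)$; this leaves $\|A\|_\F^2=\Theta(n)$ and the constant gap between the YES and NO values of $\sigma$ asymptotically unchanged, while selecting a unique top eigenvector in each case. Combining Theorem~\ref{thm:connection1} with Proposition~\ref{prop:k-DISJ}, any quantum-inspired classical algorithm using $T$ accesses to $SQ(A)$ produces a protocol of cost $\widetilde{O}(T)$ bits for Set-Disjointness, and therefore $T=\widetilde{\Omega}(n)=\widetilde{\Omega}(\|A\|_\F^2)$.
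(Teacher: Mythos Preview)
Your proposal is correct and follows essentially the same approach as the paper: both reduce from $2$-player Set-Disjointness via the stacked diagonal matrix $A=\binom{\diag(\x)}{\diag(\y)}$, so that $A^\T A=\diag(\x+\y)$ has top eigenvalue $2$ (with eigenvector $\ket{i^*}$) in the intersecting case and $1$ in the disjoint case, with $\|A\|_\F^2=\Theta(n)$. Your treatment is in fact slightly more careful than the paper's, which does not explicitly address the degeneracy of the top eigenspace in the NO case; your observation that any top eigenvector in that case is supported on indices with $x_i+y_i=1$ (followed by a two-bit check of $x_i,y_i$), or alternatively your small diagonal perturbation to force uniqueness, cleanly handles this point.
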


\begin{proof}
We show a reduction from the Set-Disjointness problem.
In the Set-Disjointness problem, Alice has $(a_1,\ldots,a_n) $ $ \in \{0,1\}^n$ and Bob has $(b_1,\ldots,b_n) \in \{0,1\}^n$. The goal is to determine if there is an $i$ such that $a_i=b_i=1$. If $i$ exists, we can assume that there is only one such $i$. For this problem, the lower bound is $\Omega(n)$. To use this result, Alice constructs a diagonal matrix $D_a = {\rm diag}(a_1,\ldots,a_n)$, and Bob constructs another diagonal matrix $D_b = {\rm diag}(b_1,\ldots,b_n)$. Let $A=\begin{pmatrix}
    D_a \\
    D_b
\end{pmatrix}$. Then we have $A^\dag A = D_a^2 + D_b^2$. If there is an intersection, say $a_i=b_i=1$, then $\sigma=\sqrt{2}$ and $\v=\ket{i}$. Otherwise $\sigma=1$. The corresponding right singular vectors are $\v=\ket{j}$ for any $j$ with $a_j=0,b_j=1$ or $a_j=1,b_j=0$. So if we can approximate $\sigma$ or sampling from $\v$, we then can solve the Set-Disjointness problem. In this construction, we have $\|A\|_\F^2 = \Theta(n), \|A\|=\Theta(1), \varepsilon=\Theta(1)$. So we obtain the claimed lower bound by Theorem~\ref{thm:connection1}.
\end{proof}

Recall that if the SVD of $A$ is $\sum_i \sigma_i \ket{u_i} \bra{v_i}$, then we define $A_{\geq \delta} := \sum_{i : \sigma_i\geq \delta}\sigma_{i} \ket{u_i} \bra{v_i}$.

\begin{prob*}[Recommendation systems]
\label{prob: Recommendation systems}
Let $A\in \mathbb{C}^{m\times n}$ be a matrix and $i\in[m]$. Let $\varepsilon, \delta >0$. Given $SQ(A)$, the goal is to $\varepsilon$-approximately sample from the $i$-th row of $A_{\geq \delta}$.
\end{prob*}

For the problem of recommendation systems, by \cite[Corollary 1.2]{bakshi2023improved}, there is a quantum-inspired classical algorithm of complexity $O(\|A\|_{\F}^4/\sigma^8\varepsilon^2)$. In comparison, there is a quantum algorithm of complexity $O(\|A\|_\F/\sigma)$ \cite{kerenidis2017quantum}. As a corollary of Proposition \ref{prop:PCA1}, we have the following lower bound.

\begin{prop}
Approximately sample from the $i$-th row of $A_{\geq \delta}$ requires making $\widetilde{\Omega}(\|A\|_\F^2)$ calls to $SQ(A)$. 
\end{prop}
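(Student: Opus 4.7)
The plan is to leverage the equivalence between sampling from the $i$-th row of a rank-one matrix and sampling from its top right singular vector: if $A_{\geq\delta}=\sigma_1\ket{u_1}\bra{v_1}$ and $u_1(i)\neq 0$, then the $i$-th row of $A_{\geq\delta}$, after normalisation, has distribution $|v_1(j)|^2$. Hence, whenever we can fix an $i$ with $u_1(i)\neq 0$ a priori, the recommendation system problem on $(A,i)$ is at least as hard as PCA sampling on $A$, and Proposition~\ref{prop:PCA1} directly supplies an $\widetilde{\Omega}(\|A\|_\F^2)$ lower bound.

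The main obstacle is that in the construction of Proposition~\ref{prop:PCA1}, the top left singular vector $u_1=\tfrac{1}{\sqrt{2}}(\ket{i^*}_A+\ket{i^*}_B)$ is supported on exactly two rows whose location depends on the secret intersection index $i^*$, so no fixed choice of $i$ can be guaranteed to have $u_1(i)\neq 0$. To sidestep this, I would pre-multiply each block by a public $n\times n$ Hadamard (or DFT) matrix $U$, letting $A'=\binom{UD_a}{UD_b}$. Since $U^\dag U=I$, we have ${A'}^\dag A'=D_a^2+D_b^2$, so the spectrum and right singular vectors are unchanged; the Set-Disjointness reduction in the proof of Proposition~\ref{prop:PCA1} goes through verbatim on $A'$, giving the same $\widetilde{\Omega}(\|A'\|_\F^2)$ lower bound for PCA sampling. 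Crucially, every coordinate of $u_1(A')$ now has magnitude $1/\sqrt{2n}\neq 0$, so any fixed row will do.

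To avoid the degenerate NO case where $A'_{\geq\delta}=0$ would make the row-sample ill-defined, I would reduce from unique Set-Disjointness on $\{2,\ldots,n\}$ with a planted intersection at position~$1$ known to both players; by Proposition~\ref{prop:k-DISJ} this remains $\Omega(n)$-hard. Choosing any $\delta\in(1,\sqrt{2})$ and $i=1$, a short calculation gives the first row of $A'_{\geq\delta}$ as $U_{1,1}\bra{1}$ in the NO case and $U_{1,1}\bra{1}+U_{1,i^*}\bra{i^*}$ in the YES case; hence $\varepsilon$-approximate sampling for small constant $\varepsilon$ distinguishes the two with $O(1)$ queries. Combining this with Theorem~\ref{thm:connection1} translates a QIC algorithm of complexity $T$ into a $\widetilde{O}(T)$-bit protocol, and since $\|A'\|_\F^2=|a|+|b|=\Theta(n)$, we conclude $T=\widetilde{\Omega}(\|A\|_\F^2)$. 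The only technical work beyond what is already in Proposition~\ref{prop:PCA1} is verifying the Hadamard-spreading argument and the computation of the row of $A'_{\geq\delta}$ in the (possibly degenerate) top eigenspace.
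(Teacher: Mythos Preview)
Your proposal is correct and is in fact more careful than the paper's own argument. The paper simply reuses the construction $A=\begin{pmatrix}D_a\\D_b\end{pmatrix}$ from the PCA lower bound, observes that for $\delta\in(1,\sqrt 2)$ one has $A_{\geq\delta}=0$ in the NO case and rank one in the YES case, and concludes that ``sampling from the $i$-th row'' solves Set-Disjointness. It never says which row $i$ is queried, and as you correctly note, for any \emph{fixed} $i$ the $i$-th row of $A_{\geq\delta}$ is zero in the NO case and also in all YES instances except the two rows indexed by the (unknown) intersection position; so a single-row sampler is not forced to distinguish the two cases.

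Your route repairs precisely this point. Premultiplying each block by a public Hadamard matrix leaves ${A'}^\dag A'=D_a^2+D_b^2$ and hence the right singular data unchanged, while spreading the left singular vector over all $2n$ coordinates, so that every row of $A'_{\geq\delta}$ is nonzero whenever $A'_{\geq\delta}\neq 0$. Planting a public intersection at coordinate~$1$ makes the first row nonzero in the NO case too, so the sampling task is well-posed on both sides and the two row-distributions are $\Omega(1)$-far in $\ell_1$. The block-row structure needed for Theorem~\ref{thm:connection1} is preserved (Alice holds $UD_a$, Bob holds $UD_b$), and $\|A'\|_\F^2=|a|+|b|=\Theta(n)$ as before. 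Two small remarks: the communication lower bound you need is the standard $2$-player unique Set-Disjointness bound $\Omega(n)$ (rather than Proposition~\ref{prop:k-DISJ}, which is the $k$-player version), and in the YES case $A'_{\geq\delta}$ has rank~$2$, so the first row is the sum over both singular directions---this is exactly the $U_{1,1}\bra{1}+U_{1,i^*}\bra{i^*}$ you wrote, but it is worth making the rank-$2$ nature explicit.
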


\begin{proof}
We use the same construction as in the proof of Proposition \ref{prop:PCA1}. Now we choose $\delta \in (1,\sqrt{2})$, then $A_{\geq \delta} = \sigma \ket{\u} \bra{\v}$ for some $\u$ if there is an intersection, and $A_{\geq \delta} = 0$ if no intersection. So if we can sample from the $i$-th row of $A_{\geq \delta}$ approximately, we then can solve the Set-Disjointness problem.
\end{proof}

\begin{prob*}[Hamiltonian simulation]
\label{prob: Hamiltonian simulation}
Let $A\in \mathbb{C}^{n\times n}$ be a Hermitian matrix with $\|A\|\leq 1$ and $\v\in\mathbb{C}^{n\times n}$ be a unit vector. Let $\varepsilon, \delta \in (0,1]$. Given $SQ(A), SQ(\v)$, output $SQ(\u)$ with probability at least $1-\delta$ such that $\|\u-e^{iA t} \v\|\leq \varepsilon$.
\end{prob*}

From \cite[Corollary 1.6]{bakshi2023improved}, we know that there is a quantum-inspired classical algorithm for Hamiltonian simulation with complexity $\widetilde{O}(\|A\|_\F^4 t^{8}/\varepsilon^2)$. In comparison, the quantum algorithm has complexity $\widetilde{O}(\|A\|_\F t)$. Regarding this problem, we have the following lower bound:

\begin{prop}
Assume that $\varepsilon\in(0,1)$ is a constant. Then $\varepsilon$-approximately sampling from {\color{black}$e^{iAt} \v$} requires making $\widetilde{\Omega}(\|A\|_\F^2)$ calls to $SQ(A), SQ(\v)$.
\end{prop}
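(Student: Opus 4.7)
My plan is to reduce from the Distributed Sampling problem of Definition~\ref{defn:DSP}, whose $\Omega(2^n)$ communication lower bound is Proposition~\ref{prop:DSP}; this mirrors the dense linear-regression argument earlier in this section. Writing $D_f = \sum_x f(x)\ket{x}\bra{x}$ and $\ket{g} = \frac{1}{\sqrt{2^n}}\sum_x g(x)\ket{x}$, one checks that the target distribution $\Pr(y) = \bigl(\frac{1}{2^n}\sum_x f(x)g(x)(-1)^{x\cdot y}\bigr)^{2}$ is exactly $|(H^{\otimes n} D_f \ket{g})_y|^2$, so Distributed Sampling boils down to applying the unitary $H^{\otimes n} D_f$ to $\ket{g}$ and sampling. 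The remaining job is to realise this unitary through an allowed Hamiltonian evolution.

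To that end, I would have Alice construct the $2\cdot 2^n$-dimensional Hermitian matrix
\begin{equation*}
A \;=\; \begin{pmatrix} 0 & D_f H^{\otimes n} \\ H^{\otimes n} D_f & 0 \end{pmatrix},
\end{equation*}
which only depends on her private $f$ because $H^{\otimes n}$ is public, and have Bob construct the unit vector $\v = (\ket{g},\,\0)^{\T}$, padded with zeros in the lower half. Since $D_f^{2} = (H^{\otimes n})^{2} = I$, both off-diagonal blocks are unitary and $A^{2} = I$, which simultaneously gives the required $\|A\| = 1$ and the clean identity $e^{iAt} = \cos(t)\, I + i\sin(t)\, A$. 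Choosing $t = \pi/2$ therefore yields
\begin{equation*}
e^{iAt}\v \;=\; iA\v \;=\; i\begin{pmatrix} \0 \\ H^{\otimes n} D_f \ket{g}\end{pmatrix},
\end{equation*}
so a single sample from an $\varepsilon$-approximation $\u$ of $e^{iAt}\v$, viewed modulo the block split, is an approximate Distributed Sampling output (a small $\ell_2$ perturbation of $\u$ translates into a small total-variation perturbation of $|\u|^2$, so a sufficiently small universal constant $\varepsilon$ works).

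For the quantitative bound, note that $D_f H^{\otimes n}$ and $H^{\otimes n} D_f$ are $2^n\times 2^n$ unitaries, so $\|A\|_{\F}^{2} = 2\cdot 2^n$, while every entry of $A$ lies in $\{0,\pm 2^{-n/2}\}$ and needs only $O(n)$ bits. Because Alice holds all of $A$ and Bob holds all of $\v$, the special case of Theorem~\ref{thm:connection1} noted in Remark~\ref{remark} converts $T$ applications of $SQ(A)$ and $SQ(\v)$ into $\widetilde O(T)$ bits of communication. Combining this with Proposition~\ref{prop:DSP} forces $\widetilde O(T) = \Omega(2^n)$, hence $T = \widetilde\Omega(2^n) = \widetilde\Omega(\|A\|_{\F}^{2})$, which is the claimed lower bound.

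The main obstacle is the Hermitian encoding step: the bare target operator $H^{\otimes n} D_f$ is unitary but not Hermitian, and one needs $A$ that Alice can build alone while also keeping $\|A\|\leq 1$ and $t = O(1)$. The block trick $A = \sm{0 & U \\ U^{\dag} & 0}$ with $U = D_f H^{\otimes n}$ handles all three requirements at once, and the identity $A^{2} = I$ collapses the exponential to a single application of $A$ at $t = \pi/2$; with this encoding in hand, the rest is a routine composition of the reduction with the communication-to-SQ transfer of Theorem~\ref{thm:connection1}.
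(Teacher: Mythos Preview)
Your proposal is correct and, like the paper, reduces from Distributed Fourier Sampling (Proposition~\ref{prop:DSP}); the difference lies only in how you realise $H^{\otimes n}D_f$ as a Hamiltonian evolution. The paper stays in dimension $2^n$ and writes
\[
A \;=\; D_f\Bigl(\tfrac{1}{2n}\sum_{j=1}^n I^{\otimes(j-1)}\otimes(I-H)\otimes I^{\otimes(n-j)}\Bigr)D_f,
\]
using that the single-qubit terms commute and that $e^{i\pi(I-H)/2}=H$, so $e^{iA(n\pi)}=D_fH^{\otimes n}D_f$; it then takes $\v=D_gH^{\otimes n}\ket{0}^n$. You instead double the dimension and use the Hermitian dilation $A=\sm{0 & U\\ U^\dag & 0}$ with $U=D_fH^{\otimes n}$, whence $A^2=I$ and $e^{iA\pi/2}=iA$. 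Both give $\|A\|=1$ and $\|A\|_\F^2=\Theta(2^n)$, and both invoke Theorem~\ref{thm:connection1}/Remark~\ref{remark} in the same way. Your argument is a bit more elementary (no commutation check or eigenvalue identity for $I-H$) and achieves constant evolution time $t=\pi/2$, whereas the paper's construction uses $t=n\pi$ but yields a physically natural ``sum of local terms'' Hamiltonian in the original space.
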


\begin{proof}
The result follows via a reduction from the distributed Fourier sampling problem. In this problem, Alice has a function $f:\{0,1\}^n \rightarrow \{\pm 1\}$ and Bob has a function $g:\{0,1\}^n \rightarrow \{\pm 1\}$. Let $D_f, D_g$ be the diagonal matrices defined by $f(x),g(x)$ respectively. Let $H_2$ be the Hadamard gate. Then their goal is to sample from $D_f H_2^{\otimes n} D_f D_g H_2^{\otimes n} \ket{0}^n$. By Proposition \ref{prop:DSP}, the communication complexity of this problem is $\Theta(2^n)$.

Let
\[
A = D_f \left(\frac{1}{2n}  \sum_{j=1}^n I_2^{\otimes (j-1)} \otimes (I_2-H_2) \otimes I_2^{\otimes(n-j)} \right) D_f,
\]
then it is easy to check that $D_f H_2^{\otimes n}D_f = e^{i A t}$, where $t=n\pi$. This matrix is in Alice's hand. For this matrix, it is not hard to show that $\|A\|_\F^2 = \Theta(2^n), \|A\|=1$. In the Hamiltonian simulation, we set $\v=D_g H_2^{\otimes n} \ket{0}^n$. This vector is in Bob's hand. Finally, by Proposition \ref{prop:DSP} and Theorem~\ref{thm:connection1}, we obtain a lower bound of $\widetilde{\Omega}(\|A\|_\F^2)$.
\end{proof}

\section{A connection between quantum query and communication complexities of matrix problems}
\label{section: quantum connection}

In this section, we generalise Theorem \ref{thm:connection1} to the quantum case. We focus on two types of quantum algorithms for matrix-based problems.

\begin{description}
\item[The query model:] 
For a matrix $A = (A_{ij}) \in \mathbb{C}^{m\times n}$, we assume that there is a quantum oracle that can query its entries in the form of
\be
\label{oracleA}
\mathcal{O}_A:  \ket{i,j} \ket{b} \mapsto  \ket{i,j} \ket{b\oplus A_{ij}}.
\ee
For sparse matrices, two additional oracles that reveal the information of positions of nonzero entries are given, i.e.,
\bea
&& \mathcal{O}_R : \ket{i} \ket{j} \mapsto \ket{i} \ket{r_{ij}}, 
\label{oracleR} \\
&& \mathcal{O}_C : \ket{i} \ket{j} \mapsto \ket{c_{ij}} \ket{j},
\label{oracleC}
\eea
where $r_{ij}$ is the index for the $j$-th non-zero entry of the $i$-th row of $A$, and if there are less than $s$ non-zero entries, then it is $j+n$; $c_{ij}$ are defined similarly according to columns.


In this setting, we focus on quantum query algorithms of the following form:
\be
\label{quantum alg1}
U_T \mathcal{O} U_{T-1} \mathcal{O} \cdots U_2 \mathcal{O} U_1 \, \ket{0..0},
\ee
where $\mathcal{O}\in \{I \otimes \mathcal{O}', \mathcal{O}' \otimes I: \mathcal{O}' \in\{ \mathcal{O}_A, \mathcal{O}_A^{-1}, \mathcal{O}_R, \mathcal{O}_R^{-1}, \mathcal{O}_C, \mathcal{O}_C^{-1}\}\}$, and $U_1,U_2, \ldots,U_T$ are unitaries independent of the oracles. 

\item[The block-encoding model:] 
For a matrix $A$, we assume that we are given a block-encoding of $A$, i.e., we are given the following unitary for some $\alpha\geq \|A\|$
\be \label{block-encoding}
W = \begin{pmatrix}
    A/\alpha & \cdot \\
    \cdot & \cdot 
\end{pmatrix}.
\ee
In this setting, we focus on quantum algorithms of the following form:
\be
\label{quantum alg2}
U_T \widetilde{W} U_{T-1} \widetilde{W} \cdots U_2 \widetilde{W} U_1 \ket{0..0},
\ee
where $\widetilde{W} \in \{I\otimes W', W' \otimes I: W'\in \{W, W^\dag, c\text{-}W, c\text{-}W^\dag\}\}$ and $U_1,U_2,\ldots,U_T$ are unitaries independent of $W$. Here, $c\text{-}W$ means control-$W$.
\end{description}

Let $\ket{\psi_A}$ be the final state of (\ref{quantum alg1}) or (\ref{quantum alg2}). Regarding the output of quantum algorithms, if the goal is to produce a quantum state, then we assume that $\ket{\psi_A} = \sqrt{1-\varepsilon} \, \ket{0} \ket{\phi_0} + \sqrt{\varepsilon} \, \ket{1} \ket{\phi_1}$ for some $\varepsilon\in[0,1/3]$. There is flag qubit (i.e., $\ket{0}$) that tells us the desired state (i.e., $\ket{\phi_0}$). If the output state is a number, say a bit, then we assume that $\ket{\psi_A} = p \ket{0} \ket{\phi_0} + \sqrt{1-p^2} \ket{1} \ket{\phi_1}$. If the output is 0 then $|p|^2 \geq 2/3$, otherwise $|p|^2 \leq 1/3$. For the above two tasks, we only need to measure the first qubit. With high probability, we will obtain the desired result.

There are some connections between the above two models.
To see this, recall that the block-encoding of a sparse matrix is constructed as follows \cite{gilyen2019quantum}: assume that $\max_{i,j} |A_{ij}|\leq 1$ for simplicity, define unitary
\be
\label{sparse BE 1}
U_L:\ket{0} \ket{i}  \rightarrow \ket{i}\ket{0} 
\rightarrow \frac{1}{\sqrt{s}}\sum_{j=1}^s \ket{i} \ket{j} 
\xrightarrow{\mathcal{O}_R}
\frac{1}{\sqrt{s}}\sum_{j=1}^s \ket{i} \ket{r_{ij}} .
\ee
And define unitary
\bea
U_R: && \ket{0} \ket{j} \ket{0}\ket{0}
\rightarrow \frac{1}{\sqrt{s}}\sum_{i=1}^s \ket{i} \ket{j} \ket{0}\ket{0}
\xrightarrow{\mathcal{O}_C \otimes I \otimes I}
\frac{1}{\sqrt{s}}\sum_{i=1}^s \ket{c_{ij}} \ket{j} \ket{0}\ket{0} \nonumber \\
&& \xrightarrow{\mathcal{O}_A\otimes I}
\frac{1}{\sqrt{s}}\sum_{i=1}^s \ket{c_{ij}} \ket{j} \ket{A_{c_{ij},j}} \ket{0} \nonumber  \\
&& \rightarrow  \frac{1}{\sqrt{s}}\sum_{i=1}^s \ket{c_{ij}} \ket{j} \ket{A_{c_{ij},j}} \left(A_{c_{ij},j} \ket{0} + \sqrt{1 - A_{c_{ij},j}^2} \ket{1} \right) \nonumber  \\
&& \xrightarrow{\mathcal{O}_A^{-1}\otimes I}  \frac{1}{\sqrt{s}}\sum_{i=1}^s \ket{c_{ij}} \ket{j} \ket{0} \left(A_{c_{ij},j} \ket{0} + \sqrt{1 - A_{c_{ij},j}^2} \ket{1} \right).
\label{sparse BE 2}
\eea
Then 
\beas
W &=&  (U_L\otimes I\otimes I)^\dag U_R
 \\
&=& U_1 (\mathcal{O}_R^{-1} \otimes I\otimes I) (\mathcal{O}_A^{-1}\otimes I) U_2 (\mathcal{O}_A \otimes I) (\mathcal{O}_C\otimes I\otimes I) U_3 \\
&=& \begin{pmatrix}
    A/s & \cdot \\
    \cdot & \cdot 
\end{pmatrix}
\eeas
is a block encoding of $A$, where $U_1, U_2, U_3$ are some unitaries independent of the oracles. As a result, if using this $W$, then the algorithm described in (\ref{quantum alg2}) is in the query model.

\subsection{Known results in these models}

It is possible that the oracles $\mathcal{O}_A,\mathcal{O}_R,\mathcal{O}_C$ or the block-encoding $W$ can be used in some ways that are very different from (\ref{quantum alg1}), (\ref{quantum alg2}) in a quantum algorithm, but most quantum algorithms for matrix problems known so far are in these two models. We below present some examples to support this.

\subsubsection{Quantum query algorithms for Boolean functions}

Let $f$ be a Boolean function. Given oracle access to $(x_1,\ldots,x_n) \in \{-1,1\}^n$, the goal is to compute $f(x_1,\ldots,x_n)$. A commonly used oracle is $\mathcal{O}: \ket{i} \ket{b} \mapsto \ket{i} \ket{b\oplus x_i}$, and a quantum query algorithm for evaluating $f(x_1,\ldots,x_n)$ works as the form $U_T \mathcal{O} U_{T-1} \mathcal{O} \cdots U_2 \mathcal{O} U_1 \, \ket{0}$. This fits into the model described in (\ref{quantum alg1}). Here we only need to view $(x_1,\ldots,x_n)$ as the diagonal matrix $\text{diag}(x_1,\ldots,x_n)$.

\subsubsection{Quantum eigenvalue transformation}

Let $A$ be a $p$-qubit operator, $\alpha, \varepsilon\in \mathbb{R}^+$ and $q\in \mathbb{N}$, then we say that the $(p+q)$-qubit unitary $W$ is an $(\alpha,q,\varepsilon)$-block-encoding of $A$, if $\|A - \alpha (\bra{0}^p\otimes I ) W (\ket{0}^p\otimes I )\| \leq \varepsilon.$
Namely, up to some error $\varepsilon$, $W$ has the form (\ref{block-encoding}). 

Block-encoding is a building block of the quantum singular value transformation \cite{gilyen2019quantum}. We below state it for Hermitian matrices. Let $P(x)$ be a polynomial of degree $d$. By \cite[Theorems 17 and 56]{gilyen2019quantum}, we can construct a $(1, q+2, 4d\sqrt{\varepsilon/\alpha})$ block encoding $\widetilde{W}$ of $P(A/\alpha)$ as follows.

\begin{itemize}
\item If $P(x)$ is an even polynomial, then there exist unitaries $U_1,\ldots,U_d$ at most depending on $P(x)$ such that
\be
\widetilde{W} = \prod_{k=1}^{d/2} \Big (U_{2k-1} W^\dag U_{2k} W\Big).
\ee
\item If $P(x)$ is an odd polynomial, then there exist unitaries  $W_1,\ldots,W_d$ at most  depending on $P(x)$ such that
\be
\widetilde{W} = U_1 W \prod_{k=1}^{(d-1)/2} \Big(U_{2k} W^\dag U_{2k+1} W\Big).
\ee
\item Generally, denote $\ell=\lfloor d/2 \rfloor$. Then 
there exist unitaries $U_1, U_2, \ldots, U_{2\ell + 2} $ at most depending on $P(x)$ such that
\be
\widetilde{W} 
= U_1
\begin{pmatrix}
I   & 0\\
0 & W
\end{pmatrix} 
\left( \prod_{k=1}^{\ell} 
U_{2k}
(I \otimes W^\dag)
U_{2k+1}
(I \otimes W) \right) U_{2\ell+2}.
\ee
\end{itemize}

As a block-encoding, we have
\[
\widetilde{W} \approx \begin{pmatrix}
P(A/\alpha) & \cdot \\
\cdot & \cdot 
\end{pmatrix}.
\]
For any given state $\ket{\b} = V \ket{0\cdots 0}$, applying $\widetilde{W}$ to $\ket{0}\ket{\b}$ gives
$
\ket{\psi} = \ket{0}\otimes P(A/\alpha)\ket{\b} + \beta \ket{1} \otimes \ket{G},
$
where $\beta$ is the amplitude and $\ket{G}$ is some garbage state.
To obtain the state $\ket{P(A/\alpha) \b}$, we can apply amplitude amplification. A key operator that will be used many times to increase the success probability is the reflection $2 \ket{\psi} \bra{\psi} - I
= \widetilde{W} (2\ket{0}\ket{\b} \bra{0}\bra{\b} - I) \widetilde{W}^\dag$. 
Now it is not hard to see that the above quantum algorithm that prepares the state $\ket{P(A/\alpha) \b}$ lies in the model described in (\ref{quantum alg2}). The uniatries $U_1,  U_2, \ldots, U_T$ are independent of $W$ but may depend on unitary $V$ that prepares the state $\ket{\b}$.

\subsubsection{Sparse Hamiltonian simulation and its applications}

The oracles (\ref{oracleA}), (\ref{oracleR}), (\ref{oracleC}) for sparse matrices were used by Berry and Childs \cite{berry2012black} to do Hamiltonian simulation. Their quantum algorithm fits into the model (\ref{quantum alg1}).\footnote{Although not in the same model, the quantum algorithm for recommendation systems also has a similar structure~\cite{kerenidis}.} The key idea of their quantum algorithm is using quantum phase estimation to a discrete-time quantum walk described by a unitary $V = i S (2TT^\dag - I)$, where $S \ket{j,k}=\ket{k,j}$ and $T = \sum_j \ket{j} \ket{\phi_j} \bra{j}$. Here if the  Hamiltonian is $H=(H_{jk})$, then $\ket{\phi_j} = \sqrt{\frac{\varepsilon}{\Lambda_1}} \sum_k \sqrt{H_{jk}^*} \ket{k}\ket{0} + \sqrt{1-\frac{\varepsilon \sigma_j}{\Lambda_1}} \ket{\zeta_j} \ket{1}$, where $\Lambda_1$ is a known upper bound of $\|H\|_1$, $\sigma_j=\sum_k |H_{jk}|, \ket{\zeta_j}$ is some garbage state, and $\varepsilon\in(0,1]$ is a small parameter that ensures a lazy quantum walk. The state $\ket{\phi_j}$ can be prepared using $O(1)$ calls to the oracles $\mathcal{O}_A,\mathcal{O}_R,\mathcal{O}_C$, which is similar to the process (\ref{sparse BE 2}). This is indeed how the oracles are used in their whole quantum algorithm for Hamiltonian simulation. So it fits into the model (\ref{quantum alg1}).
Hamiltonian simulation is a useful subroutine of many quantum algorithms for linear algebra problems. A famous example is the HHL algorithm for linear systems of equations \cite{harrow2009quantum}. As a result, all these algorithms found so far based on \cite{berry2012black} also fit into the model (\ref{quantum alg1}).

\subsection{Connections to quantum communication complexity}

Similar to the classical setting, we can also build connections between quantum protocols and quantum query algorithms for matrix problems. In the quantum case, we will mainly focus on the case that $k=2$, i.e., the Alice-Bob model. It turns out that this is enough for us. 
One key reason is that in the quantum case, the cost will be roughly $kT$ when there are $k$ players. This is in sharp contrast to $k+T$ in the classical case (see Theorem \ref{thm:connection1}). 
So we focus on the following simplified setting: Suppose Alice has a matrix $A^{(1)}\in \mathbb{R}^{m_1\times n}$ and a vector $ \b^{(1)} \in \mathbb{R}^{m_1}$, Bob has another matrix $A^{(2)}\in \mathbb{R}^{m_2\times n}$ and another vector $ \b^{(2)} \in \mathbb{R}^{m_2}$. Let $f(x)$ be a univariate function. Their goal is for Alice or Bob to output the state $\ket{f(A) \b}$, where
\be
A=\begin{pmatrix}
A^{(1)} \\
A^{(2)} \\
\end{pmatrix}, \quad
\b=\begin{pmatrix}
\b^{(1)} \\
\b^{(2)}
\end{pmatrix}.
\label{matrix}
\ee

\begin{prop}
\label{thm:model 1}
Assume that $A\in \mathbb{C}^{m\times n}$ has the decomposition (\ref{matrix}). 
For Alice or Bob to use the oracle $\mathcal{O}_A, \mathcal{O}_R$ or $\mathcal{O}_C$ once, they need to communicate $O(\log (mn))$ qubits.
\end{prop}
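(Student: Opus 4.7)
The plan is to give explicit quantum protocols that simulate each of the three oracles with a single round of two-way quantum communication, whose cost is dominated by the width of the query register. By \eqref{matrix}, $A$ is partitioned row-wise, so Alice has local oracles $\mathcal{O}_{A^{(1)}}, \mathcal{O}_{R^{(1)}}, \mathcal{O}_{C^{(1)}}$ acting on the first $m_1$ rows and Bob has the analogous oracles on the remaining $m_2$ rows. In each case the strategy is to split the query superposition into two branches according to which party owns the relevant data, have each party apply their local oracle on their own branch, and exchange the register once so that both branches are handled coherently.

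For $\mathcal{O}_A$, Alice first applies $\mathcal{O}_{A^{(1)}}$ controlled on $i\le m_1$, which coherently XORs $A_{ij}$ into the value register on the rows she owns. She then sends the entire query register to Bob, who applies $\mathcal{O}_{A^{(2)}}$ controlled on $i>m_1$ (with the row index shifted by $m_1$) and returns the register. The two controlled local oracles together realize $\mathcal{O}_A$ on any input superposition. The protocol for $\mathcal{O}_R$ is completely analogous, since $r_{ij}$ is determined purely by the owner of row $i$: each party applies their local row-sparsity oracle on their own branch, with a single exchange of the register between them.

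The oracle $\mathcal{O}_C$ requires slightly more care, because each column $j$ is itself split across both players. Let $s_j^{(1)}$ denote the number of non-zero entries of column $j$ in $A^{(1)}$; Alice knows $s_j^{(1)}$ for every $j$ and can coherently write it into an ancilla controlled on $\ket{j}$. Comparing $i$ with $s_j^{(1)}$ yields a one-bit ``ownership'' flag indicating which party holds the $i$-th non-zero of column $j$. On the branch $i\le s_j^{(1)}$, Alice applies her local column oracle to produce $c^{(1)}_{ij}$; on the branch $i>s_j^{(1)}$, she coherently replaces $i$ by $i-s_j^{(1)}$, sends the register to Bob, who applies his local column oracle and adds $m_1$ to convert the result into the correct global row index, and returns the register. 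Alice finally uncomputes both the ownership flag and $s_j^{(1)}$ from her local data.

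The main delicate point of the argument is maintaining quantum coherence throughout: every branching ancilla (in particular $s_j^{(1)}$ and the ownership flag used in the $\mathcal{O}_C$ protocol) must be uncomputed cleanly from locally available information so that the overall channel is exactly the desired unitary, and the query register itself must travel physically between the parties rather than be cloned or classically described. Once this branching structure is in place, the claimed $O(\log(mn))$ communication bound is immediate from the fact that only the query register, whose row and column indices together use $O(\log(mn))$ qubits (plus a value register whose width matches the entry precision in the case of $\mathcal{O}_A$), is exchanged in any of the three protocols.
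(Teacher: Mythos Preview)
Your proposal is correct and follows essentially the same approach as the paper: for $\mathcal{O}_A$ and $\mathcal{O}_R$ each party applies its local oracle on the rows it owns with one round trip of the query register, and for $\mathcal{O}_C$ you branch on whether $i$ is at most the column-wise nonzero count $s_j^{(1)}$ held by Alice, with the appropriate index shift for Bob's branch and clean uncomputation of the ancillas afterwards. The only cosmetic difference is that the paper ships $s_j^{(1)}$ itself to Bob and lets him do the subtraction, whereas you have Alice pre-subtract and communicate an ownership flag; both variants cost $O(\log(mn))$ qubits.
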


\begin{proof}
Let $\sum_{i,j} \alpha_{ij} \ket{i,j} \ket{0}$ be a quantum state in Alice's hand. To implement $\mathcal{O}_A$, she first applies $\mathcal{O}_{A^{(1)}}$ in her hand to the state to obtain $\sum_{i\leq m_1} \sum_j \alpha_{ij} \ket{i,j} \ket{A_{ij}} + \sum_{i> m_1} \sum_j \alpha_{ij} \ket{i,j} \ket{0}$. She then sends the state to Bob who applies $\mathcal{O}_{A^{(2)}}$ to it to obtain $\sum_{i,j} \alpha_{ij} \ket{i,j} \ket{A_{ij}}$. Bob then sends the state back to Alice. This costs $2\log(mn)$ qubits of communication. The protocol for using $\mathcal{O}_R$ is similar.

Next, we consider how to implement $\mathcal{O}_C$. 
For $i\in\{1,2\}$, denote the number of nonzero entries of $j$-th column of $A^{(i)}$ as $s_j^{(i)}$. So Alice knows $\{s_1^{(1)}, \ldots, s_n^{(1)}\}$ and Bob knows  $\{s_1^{(2)}, \ldots, s_n^{(2)}\}$. For any state $\sum_{i,j} \alpha_{ij} \ket{i} \ket{j} $ in Alice's hand, she first 
queries the indices in superposition as follows $\sum_{i,j} \alpha_{ij} \ket{i} \ket{j} \ket{s_j^{(1)}} = 
\sum_{i\leq s_j^{(1)}} \alpha_{ij} \ket{i} \ket{j} \ket{s_j^{(1)}}
+\sum_{i>s_j^{(1)}} \alpha_{ij} \ket{i} \ket{j} \ket{s_j^{(1)}} $. 
Next, she applies 
$\mathcal{O}_{C}$ for $A^{(1)}$ to this state  to obtain $\sum_{i\leq s_j^{(1)}} \alpha_{ij} \ket{c_{ij}} \ket{j} \ket{s_j^{(1)}}
+\sum_{i>s_j^{(1)}} \alpha_{ij} \ket{i+m_1} \ket{j} \ket{s_j^{(1)}} $.
Alice then sends the state to Bob and asks him to do similar operations for the second term. Here for the second term, Bob has to query the $(i-s_j^{(1)})$-th nonzero entry based on the oracle he has.
In the end, he sends the state back to Alice who undoes the query of $s_j^{(1)}$. This uses $2(2\log m+\log n)$ qubits of communication in total.
\end{proof}

From the above proof, if there are $k$-players and a coordinator, then for the coordinator to use the oracles once, they need $O(k\log(mn))$ qubits of communication.
The following result follows from step 1 of \cite[Proof of Proposition~14]{montanaro-shao2023}.

\begin{prop}
For each $i\in\{1,2\}$, let $W_i$ be an $(\alpha_i,1,0)$ block-encoding of $A^{(i)} \in \mathbb{R}^{m_i\times n}$. Then there is a quantum protocol for Alice or Bob to use an $(\sqrt{\alpha_1^2+\alpha_2^2},2,0)$ block-encoding of $A$ once with quantum communication complexity $O(\log(mn))$.
\end{prop}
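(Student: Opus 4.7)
The plan is to carry out the standard LCU-style stacking of block-encodings and observe that a single application of the resulting unitary can be implemented with one round-trip of the working register between Alice and Bob, since the construction calls each of $W_1$ and $W_2$ exactly once.

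Concretely, I would introduce a one-qubit ``selector'' register $C$ together with the single ancilla $B$ common to $W_1$ and $W_2$, and let $V$ be the one-qubit unitary with $V\ket{0} = (\alpha_1\ket{0}+\alpha_2\ket{1})/\alpha$ for $\alpha := \sqrt{\alpha_1^2+\alpha_2^2}$. Define the controlled selection
\[
\mathrm{SEL} \;:=\; \ket{0}\bra{0}_C \otimes W_1 + \ket{1}\bra{1}_C \otimes W_2, \qquad W \;:=\; \mathrm{SEL}\cdot (V_C \otimes I_{B,\mathrm{sys}}).
\]
I would then verify directly that $W$ is an $(\alpha,2,0)$ block-encoding of $A$: on input $\ket{0}_C\ket{0}_B\ket{i}_{\mathrm{sys}}$ the state becomes $\tfrac{1}{\alpha}(\alpha_1\ket{0}_C W_1\ket{0}_B\ket{i}+\alpha_2\ket{1}_C W_2\ket{0}_B\ket{i})$, whose $\ket{0}_B$-component is $\tfrac{1}{\alpha}(\ket{0}_C A^{(1)}\ket{i}+\ket{1}_C A^{(2)}\ket{i}) = A\ket{i}/\alpha$, with $C$ functioning as the row-block qubit of the output register of $A$.

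For the distributed implementation, WLOG Alice owns the full $(C,B,\mathrm{sys})$ state. She applies $V_C$ and her $C\!=\!0$-controlled $W_1$ locally (both available to her), transmits the register to Bob, who applies his $C\!=\!1$-controlled $W_2$ locally and sends the register back. The register has $O(\log(mn))$ qubits, so one use of $W$ costs $O(\log(mn))$ qubits of quantum communication; the scalars $\alpha_1,\alpha_2$ needed to specify $V$ are exchanged once as classical preprocessing and do not enter the per-use cost. The only real subtlety is the ancilla bookkeeping --- $A^{(1)}, A^{(2)}, A$ must be viewed as square operators via the usual zero-padding, and one must check that $C$ together with $B$ supply exactly the two ancilla slots of the claimed $(\alpha,2,0)$ block-encoding --- but this is a routine verification once the selector-plus-shared-ancilla structure is in place.
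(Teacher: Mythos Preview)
Your approach---a prepare/select construction with a one-round-trip communication protocol---is the standard one and is precisely what the paper invokes: the paper gives no self-contained argument here but defers to step~1 of the proof of Proposition~14 in~\cite{montanaro-shao2023}, which is exactly this construction.

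One small inconsistency worth flagging in your write-up: you declare $C$ and $B$ to be the two ancilla slots, yet in your verification $C$ ends up as the row-block qubit of the \emph{output} system register rather than an ancilla projected to $\ket{0}$. What you have written is therefore a rectangular block-encoding of $A$ (input dimension $N$, output dimension $2N$) rather than a $(\alpha,2,0)$ block-encoding of the square-padded $A$ in the paper's convention; in particular your $W$ acts on a $4N$-dimensional space, whereas the latter requires $8N$. The fix is exactly the bookkeeping you allude to: keep $C$ as a genuine ancilla, include an explicit row-block qubit $C'$ in the system register, and append a swap of $C$ and $C'$ after $\mathrm{SEL}$. Then projecting onto $\ket{0}_C\ket{0}_B$ gives $A/\alpha$ on the full $2N$-dimensional input, including the zero columns when $C'=1$. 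This additional swap is local to whoever holds the register and does not affect the $O(\log(mn))$ communication cost, so your conclusion is correct.
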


With the above two propositions, it is not hard to obtain the following result. 

\begin{thm}
\label{thm:model 2}
If there is a quantum algorithm of the form (\ref{quantum alg1}) or (\ref{quantum alg2}) of complexity $O(T)$, then there is a quantum protocol for the same task with quantum communication complexity $O(T \log (mn))$.
\end{thm}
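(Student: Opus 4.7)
The plan is to simulate the given quantum algorithm step-by-step by a quantum protocol between Alice and Bob, using the two preceding propositions as the workhorses. Without loss of generality, designate Alice as the party who maintains the full working register (including all ancillas) on which the algorithm operates. The unitaries $U_1,\ldots,U_T$ in the algorithm are by hypothesis independent of the oracle/block-encoding of $A$, so Alice can apply them locally without any communication. The only steps that involve the input are the $T$ oracle calls (in the model (\ref{quantum alg1})) or the $T$ block-encoding calls (in the model (\ref{quantum alg2})), and these are exactly what the preceding propositions show how to simulate cheaply.

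For the query model, each call is one of $\mathcal{O}_A, \mathcal{O}_R, \mathcal{O}_C$, or their inverses, possibly tensored with the identity on an extra register. Proposition~\ref{thm:model 1} gives a protocol that implements each of $\mathcal{O}_A,\mathcal{O}_R,\mathcal{O}_C$ using $O(\log(mn))$ qubits of communication; the inverses cost the same since the protocol is unitary and can be run in reverse, and tensoring with identity is free because Alice applies the simulation coherently on the relevant subregister of her global state. Summing over the $T$ applications yields a total of $O(T\log(mn))$ qubits of communication, as claimed.

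For the block-encoding model, each call uses $W$, $W^\dagger$, $c\text{-}W$, or $c\text{-}W^\dagger$. The proposition immediately before Theorem~\ref{thm:model 2} constructs a joint block-encoding of $A=\bigl(\begin{smallmatrix}A^{(1)}\\A^{(2)}\end{smallmatrix}\bigr)$ that Alice and Bob can apply with $O(\log(mn))$ qubits of communication given their individual block-encodings $W_1,W_2$ of $A^{(1)},A^{(2)}$. The inverse and the controlled variants follow with only constant overhead: the inverse by running the protocol in reverse, and the controlled version by having Alice control each local operation on her side and each message-send to Bob on the control qubit (Bob's local operations are then automatically controlled, since when the control is off the ancillas sent by Alice are in $\ket{0}$ and the protocol acts trivially). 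Again, $T$ applications cost $O(T\log(mn))$ qubits.

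Finally, the output convention transfers across: in both input models, measuring the designated flag qubit of $\ket{\psi_A}$ yields the same bit (for decision tasks) or leaves the same state on Alice's register (for state-preparation tasks) as in the original algorithm, with the same success probability, and Alice can report the outcome to Bob using $O(1)$ additional bits if needed. There is no real obstacle here beyond bookkeeping; the only point that requires a little care is the controlled block-encoding in the second model, since one must make sure that the ``off'' branch of the control does not accidentally leak information or trigger Bob's local unitary, and this is handled by the standard trick of controlling both Alice's operations and the message transmission on the same control qubit.
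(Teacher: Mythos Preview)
Your proposal is correct and follows essentially the same approach as the paper: the paper's own proof is simply the sentence ``with the above two propositions, it is not hard to obtain the following result,'' and you have filled in exactly those details --- have one party hold the register, apply the oracle/block-encoding via the two preceding propositions at $O(\log(mn))$ per call, and apply the input-independent unitaries $U_i$ locally. One small remark: your mechanism for implementing $c\text{-}W$ (``Bob's local operations are then automatically controlled, since when the control is off the ancillas sent by Alice are in $\ket{0}$ and the protocol acts trivially'') is not quite right as stated, because Bob's local unitary need not fix $\ket{0}$; the clean fix is to ship the control qubit along with the $O(\log(mn))$ message qubits and have Bob condition his local unitary on it, which still costs $O(\log(mn))$ per call.
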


\subsection{Applications: Lower bounds analysis of quantum query algorithms}

In this part, we present some applications of Theorem \ref{thm:model 2}. We will focus on a further simplified case than the setting (\ref{matrix}). Here suppose Alice has a matrix $A$ and Bob has a state $\ket{\b}$, their goal is to prepare the state $\ket{f(A)\b}$ by Alice or Bob, where $f$ is a public function. In the model (\ref{quantum alg1}) or (\ref{quantum alg2}), we now assume that the unitaries $U_1,U_2, \ldots, U_T$ can depend on the unitary that prepares the state $\ket{\b}$. We below use Theorem \ref{thm:model 2} to prove certain lower bounds in the quantum query model, either the number of queries or the number of block-encodings we must use. We can even obtain lower bounds with respect to the number of copies of $\ket{\b}$.
To be more exact, if it is Alice to output $\ket{f(A)\b}$, then there is no need to communicate the oracles or the block-encoding of $A$. What she needs is $U_1,U_2,\ldots,U_T$, which may depend on $\ket{\b}$. So we will obtain a lower bound on the number of copies of $\ket{\b}$. 
If Bob is the one to output $\ket{f(A)\b}$, then we will obtain a lower bound on the number of calls of the block-encoding of $A$, or the number of calls of queries to $A$.



\begin{prop}[Lower bounds for matrix inversion]
\label{prop:Lower bounds for matrix inversion}
Any quantum algorithm in the form (\ref{quantum alg1}) or (\ref{quantum alg2}) that prepares $\ket{A^+\b}$ requires making $\Omega(\kappa/\log (mn))$ calls to the oracles or block-encoding of $A$, and requires $\Omega(\kappa/\log (mn))$ copies of the state $\ket{\b}$.
\end{prop}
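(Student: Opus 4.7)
The plan is to combine Theorem~\ref{thm:model 2} with an $\Omega(\kappa)$ quantum communication complexity lower bound for a matrix inversion instance embedding a hard two-party function. The two lower bounds in the proposition correspond to two ways of splitting the work between Alice (who will hold $A$) and Bob (who will hold $\ket{\b}$). For the bound on oracle / block-encoding calls, I would let Bob execute the algorithm: he holds $\ket{\b}$ locally, so no copies are transmitted, while each query to $A$ costs $O(\log(mn))$ qubits of quantum communication by Proposition~\ref{thm:model 1} and the block-encoding analog preceding Theorem~\ref{thm:model 2}. A $T$-query algorithm therefore yields a protocol of total communication $O(T\log(mn))$. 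For the bound on the number of copies of $\ket{\b}$, I would instead let Alice execute the algorithm: she already holds $A$ locally, and the only communication needed is for Bob to forward her the $p$ copies of $\ket{\b}$, costing $O(p\log m)$ qubits in total.

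It remains to exhibit a hard instance with condition number $\Theta(\kappa)$. I would reduce from the inner product $\mathrm{IP}_n(\x,\y) = \bigoplus_i x_i y_i$ on $\x,\y \in \{0,1\}^n$, whose bounded-error quantum communication complexity is $\Omega(n)$ by the Cleve--van Dam--Nielsen--Tapp lower bound. In the style of the Harrow--Hassidim--Lloyd lower bound for HHL, Alice forms a Hermitian matrix $A(\x) = I - (1 - 1/n)\,U(\x)$, where $U(\x)$ is a signed permutation encoding $\x$ via its pattern of $\pm 1$ entries, and Bob prepares a unit vector $\ket{\b(\y)}$ from $\y$. The construction is tuned so that $\sigma_{\min}(A) = \Theta(1/n)$ and $\|A\| = \Theta(1)$, hence $\kappa = \Theta(n)$, and so that the Neumann series $A^{-1}\ket{\b} = \sum_{k \geq 0}(1-1/n)^k U(\x)^k \ket{\b}$ has an amplitude of magnitude $\Omega(1)$ and sign $(-1)^{\mathrm{IP}_n(\x,\y)}$ on a designated basis state. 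Measuring $\ket{A^+\b}$ in that basis then solves $\mathrm{IP}_n$ with constant bias, and by the communication lower bound any protocol producing $\ket{A^+\b}$ must exchange $\Omega(n) = \Omega(\kappa)$ qubits.

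Combining, $T\log(mn) = \Omega(\kappa)$ and $p\log m = \Omega(\kappa)$, giving both claimed bounds of $\Omega(\kappa/\log(mn))$. The main obstacle is the encoding itself: I must simultaneously arrange that $U(\x)$ is realizable with a single use of the sparse oracles or the block-encoding of $A$ (so that the reduction actually respects the models (\ref{quantum alg1}) and (\ref{quantum alg2})), that the condition number is $\Theta(n)$ rather than polynomially larger, and that the inner product can still be read off with constant bias after the $\varepsilon \leq 1/3$ slack in the output state $\ket{\psi_A}$. If this direct $\mathrm{IP}_n$ encoding turns out awkward, a convenient fallback is to lift the standard PARITY-based HHL lower bound into the Alice--Bob model by placing the phase oracle on Alice's side and a fixed input state $\ket{0}^n$ on Bob's side; the $\Omega(n) = \Omega(\kappa)$ quantum query lower bound for PARITY then transfers, via Theorem~\ref{thm:model 2}, directly into the required quantum communication lower bound.
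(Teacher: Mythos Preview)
Your overall framework---Alice holds $A$, Bob holds $\ket{\b}$, and then Theorem~\ref{thm:model 2} converts a $T$-call quantum algorithm into an $O(T\log(mn))$-qubit protocol, with the two roles swapped to get the two lower bounds---is exactly what the paper does. The difference is in the hard instance.

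The paper reduces from quantum Set-Disjointness (communication $\Omega(\sqrt{n})$), and the encoding is extremely simple: Alice builds a \emph{diagonal} matrix $A$ with $A_{ii}=1/\sqrt{n}$ when $a_i=1$ and $A_{ii}=1$ otherwise, while Bob prepares $\ket{\b}\propto\sum_{b_i=1}\ket{i}$. Then $\ket{A^{-1}\b}\propto \sqrt{n}\sum_{a_i=b_i=1}\ket{i}+\sum_{a_i=0,b_i=1}\ket{i}$, so a single measurement detects a unique intersection with constant probability. Here $\kappa=\sqrt{n}$ matches the $\Omega(\sqrt{n})$ communication bound on the nose, and the matrix is trivially $1$-sparse, so compatibility with the models (\ref{quantum alg1}) and (\ref{quantum alg2}) is immediate. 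All of the obstacles you flag (sparsity of the oracle, controlling $\kappa$, robustness to the $1/3$ output error) disappear with this construction.

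Your $\mathrm{IP}_n$ route could in principle be made to work, but as written it is not pinned down: the ansatz $A=I-(1-1/n)U(\x)$ with $U$ a signed permutation does not by itself make the Neumann series expose $\bigoplus_i x_iy_i$; you would really need the full HHL circuit-to-matrix encoding, with the oracle gates carrying $\x$ and the initial register carrying $\y$. That is doable but considerably heavier than necessary.

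The PARITY fallback, however, is a genuine gap. If Bob's input is the fixed state $\ket{0}^n$, then there is no nontrivial two-party communication lower bound: Alice can compute PARITY of her own string locally and send one bit, after which Bob can prepare the target state himself. Theorem~\ref{thm:model 2} only goes from low query complexity to low communication complexity, so a trivial communication task yields no query lower bound. In short, a query lower bound for PARITY does not ``transfer via Theorem~\ref{thm:model 2}'' to a communication lower bound; you must start from a genuinely two-party hard problem such as Set-Disjointness or $\mathrm{IP}_n$.
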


\begin{proof}
The proof is inspired by \cite[Theorem 10 (3)]{montanaro2022quantum}. Let Alice have $(a_1,\ldots,a_n)\in \{0,1\}^n$ and Bob have $(b_1,\ldots,b_n)\in \{0,1\}^n$. Alice constructs a diagonal matrix $A$ by setting $A_{ii}=1/\sqrt{n}$ if $a_i=1$ and $1$ if $a_i=0$. Bob constructs a quantum state $\ket{\b} \propto \sum_{b_i=1} \ket{i}$. Then the quantum state $\ket{A^{-1}\b} \propto \sqrt{n} \sum_{a_i=b_i=1} \ket{i} + \sum_{a_i=0,b_i=1} \ket{i} $. When there is only one intersection, then we will set index $i$ with $a_i=b_i=1$ with probability close to $1/2$. So if we have the state $\ket{A^{-1}\b}$, we then can solve the Set-Disjointness problem using constantly many invocations of the above, whose communication complexity is $\Omega(\sqrt{n})=\Omega(\kappa)$. By Theorem \ref{thm:model 2}, we obtain the claimed lower bound.
\end{proof}

{\color{black} Apart from the $\log(mn)$ factor, the lower bound obtained above coincides with known lower bound for matrix inversion \cite{harrow2009quantum}.}
Note that in the query model, there is a similar proof based on Grover's algorithm \cite{grover1996fast}. Let $(x_1,\ldots,x_n)\in\{0,1\}^n$. Assume that there is an $i_0$ such that $x_{i_0}=1$ and $x_j=0$ for all $j\neq i_0$. It is known that the quantum query complexity of finding $i_0$ is $\Theta(\sqrt{n})$.
Let $A$ be a diagonal matrix with $A_{ii}=1/\sqrt{n}$ if $i=i_0$ and $A_{ii}=1$ if $i\neq i_0$. Let $\ket{\b}= \frac{1}{\sqrt{n}} \sum_i \ket{i}$. Then the quantum state $\ket{A^{-1}\b} = \frac{1}{\sqrt{2n-1}}(\sqrt{n} \ket{i_0} + \sum_{j\neq i_0} \ket{j})$. If we can prepare this state, then we can solve the search problem by measuring $\ket{A^{-1}\b}$. For this problem, the condition number $\kappa=\sqrt{n}$. So it gives a lower bound $\Omega(\kappa)$ of query complexity for the matrix inversion problem. Although the lower bound in Proposition \ref{prop:Lower bounds for matrix inversion} is weaker by a factor of $\log (mn)$, it gives a lower bound on the number of copies of the state $\ket{\b}$ at the same time. Moreover, the lower bounds also hold in the block-encoding model.

\begin{prop}[Lower bounds for matrix exponential]
Any quantum algorithm in the form (\ref{quantum alg1}) or (\ref{quantum alg2}) that prepares $\ket{e^{At}\b}$ requires making $\Omega(e^t/\log (mn))$ calls to the oracles or block-encoding of $A$, and requires $\Omega(e^t/\log (mn))$ copies of the state $\ket{\b}$.
\end{prop}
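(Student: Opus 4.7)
My plan is to mirror the reduction used for Proposition~\ref{prop:Lower bounds for matrix inversion}, replacing the $1/\sqrt{n}$ scaling that produced a factor of $\kappa$ with an $e^{t}$ factor coming from exponentiating a $0/1$-valued diagonal entry. Concretely, I reduce from the Set-Disjointness problem on input length $n$, where Alice holds $(a_1,\dots,a_n)\in\{0,1\}^n$ and Bob holds $(b_1,\dots,b_n)\in\{0,1\}^n$, and whose bounded-error quantum communication complexity is $\Omega(\sqrt{n})$. Choose $n=\lceil e^{2t}\rceil$ so that $\sqrt{n}=\Theta(e^{t})$.

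Next, Alice constructs the diagonal matrix $A=\diag(a_1,\dots,a_n)$, which has $\|A\|\le 1$ and hence satisfies the standing norm assumption used by the model. Bob prepares $\ket{\b}=\frac{1}{\sqrt{|b|}}\sum_{b_i=1}\ket{i}$. A direct computation gives
\[
e^{At}\ket{\b}=\frac{1}{\sqrt{|b|}}\Bigl(e^{t}\!\!\sum_{a_i=b_i=1}\!\!\ket{i}+\!\!\sum_{a_i=0,\,b_i=1}\!\!\ket{i}\Bigr),
\]
so after normalising, the probability of measuring an index $i$ with $a_i=b_i=1$ equals
\[
\frac{e^{2t}\,|a\wedge b|}{e^{2t}\,|a\wedge b|+(|b|-|a\wedge b|)}.
\]
With our choice $n=\lceil e^{2t}\rceil$ and the standard promise $|a\wedge b|\in\{0,1\}$, this probability is either $0$ (no intersection) or at least $1/2$ (unique intersection). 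Thus $O(1)$ samples from $\ket{e^{At}\b}$, followed by $O(\log n)$ qubits of communication to let Alice and Bob verify that the observed index lies in both supports, suffice to decide Set-Disjointness with constant success probability.

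Now I invoke Theorem~\ref{thm:model 2}: any quantum algorithm of the form~(\ref{quantum alg1}) or~(\ref{quantum alg2}) that prepares $\ket{e^{At}\b}$ using $T$ queries to the oracles (or block-encoding uses) can be simulated by a quantum protocol of communication $O(T\log(mn))$, so the $\Omega(\sqrt{n})=\Omega(e^{t})$ quantum communication lower bound yields $T=\Omega(e^{t}/\log(mn))$. For the ``copies of $\ket{\b}$'' statement, I use the same reduction but observe that if Alice is the party required to output $\ket{e^{At}\b}$, she already possesses $A$ and only needs access to the state-preparation unitary for $\ket{\b}$; each invocation of that unitary can be simulated by Bob at the cost of $O(\log m)$ qubits of communication, so a protocol using $C$ copies of $\ket{\b}$ communicates $O(C\log m)$ qubits, giving $C=\Omega(e^{t}/\log(mn))$.

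The only genuinely delicate point is the parameter tuning in Step~2: the reduction needs the $e^{2t}$ amplification to dominate the unrelated mass $|b|-|a\wedge b|$, which is why $n$ must be taken no larger than $e^{2t}$, and at the same time $n$ must be large enough that Razborov's $\Omega(\sqrt{n})$ lower bound yields $\Omega(e^{t})$. Fixing $n=\lceil e^{2t}\rceil$ balances these two requirements, and the Set-Disjointness lower bound still applies to the promise (unique-intersection) version, so the reduction goes through without further obstacles.
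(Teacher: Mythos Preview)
Your proof is correct and takes essentially the same approach as the paper: reduce from Set-Disjointness with $A=\diag(a_1,\dots,a_n)$ and $\ket{\b}$ supported on Bob's indices, tune parameters so that $e^{t}=\Theta(\sqrt{n})$, and invoke Theorem~\ref{thm:model 2}. The only cosmetic difference is that the paper fixes $t=\tfrac{1}{2}\ln n$ whereas you fix $n=\lceil e^{2t}\rceil$, and you spell out the verification step and the ``copies of $\ket{\b}$'' argument a bit more explicitly.
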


\begin{proof}
The proof is similar to the previous one. Imagine that Alice has $\a\in\{0,1\}^n$ and Bob has $\b\in\{0,1\}^n$. Alice constructs a diagonal matrix $A$ using $\a$. Then 
\[
e^{At} \b = \sum_{i=1}^n e^{a_i t} b_i \ket{i}
= \sum_{a_i=b_i=1} e^{t} \ket{i} + \sum_{a_i=0,b_i=1} \ket{i}.
\]
Let $t= \frac{1}{2} \ln n$. If we can prepare $\ket{e^{At} \b}$, then sampling from it returns $i$ such that $a_i=b_i=1$ with probability close to $1/2$. So, just like in the previous proof, we can solve the Set-Disjointness problem using this routine constantly many times. As the lower bound of the Set-Disjointness problem is $\Omega(\sqrt{n})$ and $e^t=\sqrt{n}$, by Theorem \ref{thm:model 2}, we obtain a lower bound of $\Omega(e^{t}/\log(mn))$ for preparing the state $\ket{e^{At}\b}$. 
\end{proof}

Recall that in \cite[Proposition 14]{an2022theory}, An et al.~showed that the number of copies of $\ket{\b}$ required to preparing state $\ket{e^{At}\b}$ is roughly $\Omega(e^{t})$. In the above, we gave a simple proof of this and also proved the same lower bound on the query complexity with respect to $A$ at the same time.

\begin{prop}[Lower bounds for matrix powers]
Any quantum algorithm in the form (\ref{quantum alg1}) or (\ref{quantum alg2}) that prepares $\ket{A^d\b}$ requires making $\Omega((\kappa^d+1/\gamma)/\log(mn))$ calls to the oracles or block-encoding of $A$, and requires $\Omega((\kappa^d+1/\gamma)/\log(mn))$ copies of the state $\ket{\b}$.
\end{prop}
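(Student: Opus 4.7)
The plan is to follow Proposition~\ref{prop:Lower bounds for matrix inversion} and the matrix-exponential bound above, reducing unique-intersection $2$-player Set-Disjointness (with $\Omega(\sqrt{n})$ quantum communication lower bound) to the task of preparing $\ket{A^d \b}$. Two separate reductions are needed, each calibrated so that the instance family witnesses one of the two summands $\kappa^d$ and $1/\gamma$. In both, Alice encodes her Set-Disjointness input $\a \in \{0,1\}^n$ into a diagonal matrix $A$, while Bob, who holds $\b \in \{0,1\}^n$ with Hamming weight $|\b| = \Theta(n)$, prepares $\ket{\b} = \frac{1}{\sqrt{|\b|}} \sum_{b_i = 1} \ket{i}$. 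In each setting, $O(1)$ measurements of the output state locate the unique intersection index $i_0$, solving Set-Disjointness; Theorem~\ref{thm:model 2} then converts the $\Omega(\sqrt{n})$ communication cost into an $\Omega(\sqrt{n}/\log(mn))$ lower bound on oracle/block-encoding accesses and on copies of $\ket{\b}$.

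For the $\kappa^d$ term, Alice sets $A = \diag(c_1, \ldots, c_n)$ with $c_i = 1$ when $a_i = 1$ and $c_i = 1/\kappa$ otherwise, giving $\|A\| = 1$ and $\sigma_{\min}(A) = 1/\kappa$. Then $A^d \b \propto \ket{i_0} + \kappa^{-d} \sum_{a_i = 0,\, b_i = 1} \ket{i}$ in the intersecting case. Choosing $\kappa^d = \Theta(\sqrt{n})$ makes $\kappa^{2d}/|\b| = \Theta(1)$, so $i_0$ carries constant probability mass upon measurement, and the reduction yields $\Omega(\kappa^d/\log(mn))$. For the $1/\gamma$ term, Alice instead takes $A = \diag(a_1, \ldots, a_n)$ as a $\{0,1\}$-projector (so $\kappa^d = 1$), keeping Bob's state unchanged. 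Now $A^d \b \propto \ket{i_0}$ deterministically in the intersecting case, while $\|A\|^d \|\b\| = 1$ and $\|A^d \b\| = \Theta(1/\sqrt{n})$, giving $\gamma = \Theta(1/\sqrt{n})$ and therefore $1/\gamma = \Theta(\sqrt{n})$. Since an algorithm beating either target would contradict the corresponding reduction, the two combine to give $\Omega((\kappa^d + 1/\gamma)/\log(mn))$.

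The only delicate step is the parameter calibration: each construction must be set up so that exactly one of $\kappa^d$ or $1/\gamma$ matches the $\Omega(\sqrt{n})$ Set-Disjointness lower bound while the other is $O(1)$, which forces the two constructions to take genuinely different shapes and is what prevents a single reduction from producing both summands simultaneously. The same argument handles both oracle/block-encoding queries and copies of $\ket{\b}$ in parallel, since Theorem~\ref{thm:model 2} as applied in the matrix-inversion proof accounts for both quantities uniformly.
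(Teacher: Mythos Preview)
Your proposal is correct and follows essentially the same approach as the paper: both reductions use a diagonal matrix built from Alice's Set-Disjointness input (with entries $1$ and $\varepsilon = 1/\kappa$ for the $\kappa^d$ term, and the degenerate $\varepsilon = 0$ projector for the $1/\gamma$ term) together with Bob's input encoded as the state $\ket{\b}$, calibrating $\kappa^d = \Theta(\sqrt{n})$ and $1/\gamma = \Theta(\sqrt{n})$ respectively. The paper's proof is terser but the construction and parameter choices match yours exactly.
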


\begin{proof}
We will use similar constructions as the above proofs. Alice constructs a diagonal matrix $A$ with diagonal entry $A_{ii}=1$ if $a_i=1$ and $A_{ii}=\varepsilon$ if $a_i=0$. Bob uses this input to define the state $\ket{\b}$. Then the state $\ket{A^d \b}$ is proportional to $\sum_{a_i=b_i=1}\ket{i} + \varepsilon^d \sum_{a_i=0, b_i=1}\ket{i}$. We will choose $\varepsilon$ such that $\varepsilon^{2d} n = 1$. So we can solve the Set-Disjointness problem with high probability by measuring this state. The condition number of $A$ is $\kappa=1/\varepsilon$ and so $\kappa^d = \sqrt{n}$. It equals the complexity of the Set-Disjointness problem. So we obtain a lower bound of $\Omega(\kappa^d)$.
To prove $\Omega(1/\gamma)$, Alice sets $\varepsilon=0$. Now $\kappa=1$ and $\gamma = \|A^d\b\|/\|\b\| = 1/\sqrt{n}$. 
\end{proof}

{\color{black} In \cite{montanaro-shao2023}, the authors studied the lower bound of quantum query complexity of computing $f(A)_{i,j}$, where $A$ is sparse and $f$ is a continuous function. They proved a low bound in terms of the approximate degree of $f(x)$. If $f(x)=x^d$, then the approximate degree is $\Theta(\sqrt{d})$ and so the lower bound of computing $A^d_{i,j}$ is $\Omega(\sqrt{d})$. However, this task is weaker than preparing the state $\ket{A^d\b}$. If we apply quantum singular value transform \cite{gilyen2019quantum}, then given the block-encoding of $A$, we can construct a block-encoding of $A^d$. To sample from $\ket{A^d\b}$, we need to do the postselection. The success probability depends on $\|A^d\b\|\geq 1/\kappa^d$ if assuming $\|A\|=1$. This leads to a quantum algorithm of complexity $O(\kappa^d)$, which almost matches our lower bound.}

\section{Conclusions and open problems}

In this work, we built a connection between quantum-inspired classical algorithms and communication complexity. In particular, we showed that quantum-inspired classical algorithms can be efficiently simulated using communication protocols. Using this connection, we proved some lower bounds of quantum-inspired classical algorithms for some matrix-relevant problems. We also generalised the idea to the quantum case. 

The following are some open problems that we think merit further study.

\begin{enumerate}
    \item The most exciting problem is to prove better lower bounds for the tasks mentioned in this work. We feel some new ideas are required. For example, the known results on communication complexity used in this paper are Set-Disjointness and Gap-Hamming problem in the coordinator model. If we have more candidate hard problems for communication in this model, we may use them to obtain better lower bounds. It is also possible that more efficient quantum-inspired classical algorithms exist such that the lower bounds obtained in this paper are close to tight. So the exploration of more efficient quantum-inspired classical algorithms is also very important.
    
    \item The quantum query complexity of functions of matrices is widely studied in the past \cite{gilyen2019quantum,montanaro-shao2023}, see the references in \cite{montanaro-shao2023} for a summary. In this work, we provided a new method to prove lower bounds of quantum query complexity, so it is interesting to see if some nontrivial new lower bounds can be obtained.
    
    \item In this paper, we only used the connection on one side, i.e., proving lower bounds. As discussed at the end of Section \ref{section:the connection}, via the connection, we can also propose efficient protocols for many matrix-relevant problems from quantum-inspired classical algorithms we know so far, e.g., \cite{bakshi2023improved,shao2022faster,chia2022sampling}. So it is interesting to see some useful applications of this method in communication complexity. Theorem \ref{thm:connection2} might be helpful in this direction.
\end{enumerate}

\section*{Acknowledgements}

This research is supported by the National Key Research Project of China under Grant No. 2023YFA1009403.

\appendix

\section{Another connection}
\label{section:another connection}

Apart from the connection discussed in Section \ref{section:the connection}, another way to see the connection, which should be more general in practice, is considering $A=\sum_{i=1}^k \lambda_i A^{(i)}$ and $\b=\sum_{i=1}^k \mu_i \b^{(i)}$. 
First, we list some definitions.

\begin{defn}
Let $p=(p_1,\ldots,p_n),q=(q_1,\ldots,q_n) \in \mathbb{R}^n_{\geq 0}$ be two distributions, i.e., $\sum p_i=\sum q_i=1$ we say $p$ $\phi$-oversamples $q$ if for all $i\in [n]$, $p_i \geq q_i/\phi$. 
\end{defn}

\begin{defn}[Oversampling and query access to a vector]
\label{defn:oversampling}
Let $\v=(v_1,\ldots,v_n) \in \mathbb{C}^n$ and $\phi\geq 1$, we have $SQ_{\phi}(\v)$, $\phi$-oversampling and query access to $\v$, if we have $Q(\v)$ and $SQ(\tilde{\v})$ for some vector $\tilde{\v} =(\tilde{v}_1,\ldots,\tilde{v}_n) \in \mathbb{C}^n$ satisfying $\|\tilde{\v}\|^2 = \phi \|\v\|$ and $|\tilde{v}_i|^2 \geq |v_i|^2$ for all $i\in [n]$.
\end{defn}

\begin{defn}[Oversampling and query access to a matrix]
Let $A=(A_{ij}) \in \mathbb{C}^{m\times n}$ and $\phi\geq 1$, we have $SQ_{\phi}(\v)$, $\phi$-oversampling and query access to $\v$, if we have $Q(A)$ and $SQ(\widetilde{A})$ for some matrix $\widetilde{A}=(\widetilde{A}_{ij}) \in \mathbb{C}^{m\times n}$ satisfying $\|\widetilde{A}\|_{\F}^2 = \phi \|A\|_{\F}^2$ and $|\widetilde{A}_{ij}|^2 \geq |A_{ij}|^2$ for all $(i,j)\in [m]\times [n]$.
\end{defn}

\begin{lem}[Lemma 3.5 of \cite{chia2022sampling}]
\label{lem3.5}
Suppose we are given $SQ_{\phi}(\v)$ and $\delta\in(0,1]$, then we can sample from $D_{\v}$ with probability $\geq 1-\delta$ with $O(\phi \log(1/\delta))$ applications of $SQ_{\phi}(\v)$. We can also estimate $\|\v\|$ to $\varepsilon$ multiplicative error for $\varepsilon \in (0,1]$ with probability  $\geq 1-\delta$ with $O(\varepsilon^{-2})$ applications of $SQ_{\phi}(\v)$.
\end{lem}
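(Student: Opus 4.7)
The plan is to use classical rejection sampling for both parts, exploiting the fact that $SQ_\phi(\v)$ provides $SQ(\tilde{\v})$ for a majorizing vector $\tilde{\v}$ together with entrywise query access to $\v$ itself. The crucial properties I will use are $\|\tilde{\v}\|^2 = \phi\|\v\|^2$ and $|\tilde{v}_i|^2 \geq |v_i|^2$ for every coordinate $i$, which together guarantee that the ratio $|v_i|^2/|\tilde{v}_i|^2$ is always a valid probability.

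For the sampling task, the procedure is: draw $i \sim D_{\tilde{\v}}$ using one sample from $SQ(\tilde{\v})$, then query $v_i$ and $\tilde{v}_i$ and accept $i$ with probability $|v_i|^2/|\tilde{v}_i|^2$. A one-line calculation shows the single-trial acceptance probability equals $\sum_i \frac{|\tilde{v}_i|^2}{\|\tilde{\v}\|^2}\cdot\frac{|v_i|^2}{|\tilde{v}_i|^2} = \frac{\|\v\|^2}{\|\tilde{\v}\|^2} = \frac{1}{\phi}$, and, conditioned on acceptance, the accepted index is distributed exactly as $D_{\v}$ (this is the conceptual crux). Since the number of trials to the first acceptance is geometric with mean $\phi$, running $N = O(\phi \log(1/\delta))$ independent trials succeeds with probability $1 - (1-1/\phi)^N \geq 1-\delta$; each trial spends $O(1)$ applications of $SQ_\phi(\v)$.

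For norm estimation I would reuse the rejection procedure: the acceptance indicators across independent trials are i.i.d.\ Bernoulli with mean $p := 1/\phi = \|\v\|^2/\|\tilde{\v}\|^2$, so if $\hat p$ is the empirical rate over $N$ trials, then $\hat p \cdot \|\tilde{\v}\|^2$ is an unbiased estimator of $\|\v\|^2$ (and $\|\tilde{\v}\|^2 = \phi\|\v\|^2$ is a known quantity obtained from $SQ(\tilde{\v})$). A multiplicative Chernoff bound on a sum of Bernoullis gives $|\hat p/p - 1| \leq \varepsilon$ with probability $\geq 1-\delta$ using $N = O(\varepsilon^{-2})$ trials (with the mild $\phi$ and $\log(1/\delta)$ factors absorbed as in the cited statement, or made explicit via median-of-means amplification). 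Transferring the multiplicative $\varepsilon$ error from $\|\v\|^2$ to $\|\v\|$ costs only a factor of two by $\sqrt{1\pm\varepsilon} = 1 \pm \Theta(\varepsilon)$.

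The main obstacle is not conceptual but bookkeeping: verifying the conditional distribution after acceptance is exactly $D_\v$ (which is the step that justifies why oversampling access is ``as good as'' true $SQ$ access, up to the $\phi$ overhead) and choosing the right tail bound for the norm step. A naive additive Hoeffding bound on $\hat p$ would not translate cleanly into a multiplicative bound when $p$ is small, so one must either apply the multiplicative Chernoff bound directly to Bernoulli sums, or alternatively estimate $\|\v\|^2$ additively and bootstrap using a preliminary constant-factor estimate of $p$; both are standard and match the claimed complexity.
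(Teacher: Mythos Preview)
The paper does not prove this lemma; it is quoted verbatim from \cite{chia2022sampling} and used as a black box in the proof of Theorem~\ref{thm:connection2}. Your rejection-sampling argument is exactly the standard proof (and is the one given in the cited reference): sample $i\sim D_{\tilde\v}$, accept with probability $|v_i|^2/|\tilde v_i|^2$, and observe that the acceptance probability is $1/\phi$ while the conditional law of an accepted index is $D_\v$. Both the geometric-trials analysis for sampling and the Bernoulli-mean estimation for the norm are correct.

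One small remark on the norm-estimation complexity: as you implicitly noticed, a multiplicative Chernoff bound on a Bernoulli$(1/\phi)$ mean genuinely needs $\Theta(\phi\,\varepsilon^{-2}\log(1/\delta))$ trials, not $O(\varepsilon^{-2})$. The statement as recorded in the paper suppresses the $\phi$ and $\log(1/\delta)$ factors; your parenthetical about ``mild factors absorbed'' is the right diagnosis, and nothing in the paper's later use of the lemma is sensitive to this discrepancy.
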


\begin{thm}
\label{thm:connection2}
Assume that player $\P_i$ holds a matrix $A^{(i)}\in \mathbb{R}^{m\times n}$ and a vector $\b^{(i)}\in \mathbb{R}^{m}$ whose entries are specified by $O(\log q)$ bits, where $i\in [k]$. Let $A=\sum_i \lambda_i A^{(i)}$ and $\b=\sum_i \mu_i \b^{(i)}$. Then in the coordinator model, we have the following.

\begin{itemize}

\item For the coordinator to use $SQ(\b)$ $O(T)$ times, it costs $O(kT\phi_b\log (qm))$ bits of communication, where
\[
\phi_b = k \frac{\sum_i  \|\mu_i \b^{(i)}\|^2 }{\|\b \|^2}.
\]

\item For the coordinator to use $SQ(A)$ $O(T)$ times, it costs $O(kT\phi_A \log (qmn))$ bits of communication, where
\[
\phi_A = k \frac{\sum_i  \|\lambda_i A^{(i)}\|_\F^2}{\|A\|_\F^2}.
\]

\end{itemize}
\end{thm}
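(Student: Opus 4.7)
The plan is to reduce to Lemma \ref{lem3.5} by having the coordinator build oversampling-and-query access $SQ_{\phi_b}(\b)$ (and analogously $SQ_{\phi_A}(A)$), and then invoke that lemma to convert oversampling into exact sampling. The natural choice of oversampling vector is dictated by Cauchy--Schwarz: define $\tilde{\b} \in \mathbb{R}^m$ by
\[
\tilde{b}_j^2 := k \sum_{i=1}^k |\mu_i b^{(i)}_j|^2, \qquad j\in[m].
\]
Then $|b_j|^2 = \big|\sum_i \mu_i b^{(i)}_j\big|^2 \leq k \sum_i |\mu_i b^{(i)}_j|^2 = \tilde{b}_j^2$, so $\tilde{\b}$ entrywise dominates $\b$, and $\|\tilde{\b}\|^2 = k \sum_i \|\mu_i \b^{(i)}\|^2 = \phi_b \|\b\|^2$, matching Definition \ref{defn:oversampling}. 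For the matrix case, set $\widetilde{A}_{ij}^2 := k \sum_{\ell=1}^k |\lambda_\ell A^{(\ell)}_{ij}|^2$; Cauchy--Schwarz again gives $|A_{ij}|^2 \leq \widetilde{A}_{ij}^2$ and $\|\widetilde{A}\|_\F^2 = \phi_A \|A\|_\F^2$.

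Next I would describe the protocol the coordinator uses to support one call to $SQ_{\phi_b}(\b)$. In a one-time preprocessing step, each player $\P_i$ sends $\|\mu_i \b^{(i)}\|^2$ to the coordinator, for $O(k\log(qm))$ bits total; after this the coordinator holds the distribution on $[k]$ with probabilities proportional to $\|\mu_i \b^{(i)}\|^2$. To sample from $\tilde{\b}$, the coordinator samples $i$ locally from this distribution and then asks $\P_i$ to return an $\ell_2$-sample from $\mu_i \b^{(i)}$; the marginal on the returned index $j$ is $|\tilde{b}_j|^2/\|\tilde{\b}\|^2$, as required. To query $b_j$ (or $\tilde{b}_j^2$), the coordinator broadcasts $j$, collects $b^{(i)}_j$ from each player, and combines them. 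Each call to $SQ_{\phi_b}(\b)$ therefore costs at most $O(k\log(qm))$ bits, and Lemma \ref{lem3.5} converts $O(\phi_b)$ such calls into one (high-probability) use of $SQ(\b)$; summing over $T$ accesses and absorbing the preprocessing yields the claimed $O(kT\phi_b\log(qm))$ bound.

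The matrix bound follows by the same template applied to the sub-accesses $SQ(A_{i*})$ and $SQ(\a)$ that comprise $SQ(A)$. The Frobenius norms $\|\lambda_i A^{(i)}\|_\F^2$ are collected once for sampling at the row level; whenever a specific row $i$ is touched, the $k$ partial row norms $\|\lambda_\ell A^{(\ell)}_{i*}\|^2$ are collected on demand ($O(k\log(qmn))$ bits) to oversample within that row, and queries to $A_{ij}$ are handled by gathering the $k$ local contributions. The key technical step is the Cauchy--Schwarz bound and the associated factor of $k$ it introduces: this is exactly what appears in the definitions of $\phi_b$ and $\phi_A$, and is essentially unavoidable for an arbitrary $k$-term linear combination (the bound $|\sum_i c_i|^2 \leq k \sum_i |c_i|^2$ is tight up to constants when the $c_i$ have equal magnitude). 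A secondary subtlety is that exact queries to $\|A_{i*}\|$ would be expensive, so the matrix case relies on the approximate-norm estimation part of Lemma \ref{lem3.5}, which is sufficient for the typical downstream quantum-inspired algorithms.
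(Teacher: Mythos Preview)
Your proposal is correct and follows essentially the same approach as the paper: you define the same oversampling vector $\tilde{\b}$ (and matrix $\widetilde{A}$) via the Cauchy--Schwarz-motivated formula $\tilde{b}_j^2 = k\sum_i|\mu_i b_j^{(i)}|^2$, show the coordinator can support each call to $SQ_{\phi_b}(\b)$ with $O(k\log(qm))$ bits of communication using the obvious two-stage sampling and entrywise-gathering protocol, and then invoke Lemma~\ref{lem3.5} to convert oversampling access into $SQ(\b)$ at the cost of a $\phi_b$ factor. The paper's proof is structured identically, including the same row-level and Frobenius-level decomposition in the matrix case.
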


In comparison, Theorem \ref{thm:connection1} has a lower complexity than the one stated in Theorem \ref{thm:connection2}. This is the main reason why we choose to use Theorem \ref{thm:connection1} to prove lower bounds.
However, the case discussed in Theorem \ref{thm:connection2} seems more powerful since, for example, the linear regression $\argmin\|A\x-\b\|$ with $A,\b$ given in (\ref{thm:eq}) is equivalent to $\argmin_\x\|\sum_i (A^{(i)})^\T A^{(i)} \x - \sum_i (A^{(i)})^\T \b^{(i)}\|$, which is a special case of the one discussed above. We feel Theorem \ref{thm:connection2} can play more important roles in designing efficient classical protocols for problems, such as machine learning problems or Hamiltonian simulation.

\begin{proof}
First, we show how the coordinator uses $SQ(\b)$.
We will use $b^{(i)}_j$ and $b_j$ to denote the $j$-th entry of $\b^{(i)}$ and $\b$ respectively.
Given $SQ(\b^{(i)})$ for all $i\in [k]$, we can construct $SQ_{\phi_b}(\b)$ (see Definition \ref{defn:oversampling} for $SQ_{\phi_b}$) in the following way, which is based on the proof of \cite[Lemma 3.6]{chia2022sampling} and also provides a protocol for the coordinator to use $SQ(\b)$.

\begin{enumerate}
\item Obtain $Q(\b)$: For each $j\in[m]$, we can compute $b_j=\sum_{i=1}^k \mu_i b^{(i)}_j$ by querying all $b^{(i)}_j$.

\item Obtain $SQ(\tilde{\b})$: Define $\tilde{\b}$ via
\[
\tilde{b}_j = \sqrt{k \sum_{i=1}^k |\mu_i b^{(i)}_j|^2}, \quad j \in [m].
\]
It is easy to check that $\|\tilde{\b}\|^2 = \phi_b \|\b\|^2$ and $|\tilde{b}_j| \geq |b_j|$ for all $j\in [m]$.
\begin{enumerate}
\item Query for $\|\tilde{\b}\|^2$: We can compute $\|\tilde{\b}\|^2 = k \sum_i  \|\mu_i \b^{(i)}\|^2$ directly by querying  $\|\b^{(i)}\|^2$ for all $i\in[k]$.
\item Query for $\tilde{b}_j$: We can compute $\tilde{b}_j$ by querying $ b^{(i)}_j$ for all $i\in[k]$.
\item Sample from $\tilde{\b}$: We sample an index $i\in[k]$ with probability $\|\mu_i \b^{(i)}\|^2/\sum_j\|\mu_j \b^{(j)}\|^2$, then take a sample $j\in[m]$ from the distribution defined by $\title{\b}^{(i)}$. We can check that the probability is
\[
\sum_{i=1}^k \frac{\|\mu_i \b^{(i)}\|^2}{\sum_j\|\mu_j \b^{(j)}\|^2} \frac{|b^{(i)}_j|^2}{\|\b^{(i)}\|^2}
= \frac{\sum_{i=1}^k |\mu_i b^{(i)}_j|^2}{\sum_j\|\mu_j \b^{(j)}\|^2}
= \frac{|\tilde{b}_j|^2}{\|\title{\b}\|^2}.
\]
\end{enumerate}
\end{enumerate}

The above procedure provides a protocol for the coordinator to use $SQ_{\phi_b}(\b)$, from which the coordinator can construct $SQ(\b)$. Firstly, querying for entries of $\b$ is straightforward, which costs $O(k \log q)$ bits of communication. Secondly, query for $\|\tilde{\b}\|^2$ and $\tilde{b}_j$ are also straightforward, which costs $O(k \log q)$ bits of communication. Thirdly, to sample from $\tilde{\b}$, player $\P_i$ first sends $\|\tilde{\b}^{(i)}\|$ to the coordinator. This costs $O(k \log (qm))$ bits of communication. Now the coordinator can do step (2.c) discussed in the above procedure with $O(\log q)$ bits of communication with player $\P_i$. In summary, for the coordinator to use $SQ_{\phi_b}(\b)$ $O(T)$ times, they use $O(T\log q+kT\log (qm))=O(kT\log (qm))$ bits of communication. Combining Lemma \ref{lem3.5}, it costs $O(kT\phi_b \log (qm))$ bits of communication for the coordinator to use $SQ(\b)$ $O(T)$ times.

Similar results also hold for linear combinations of matrices. 
Given $SQ(A^{(i)})$ for all $i\in [k]$, we can construct $SQ_{\phi}(A)$ in the following way, which is based on the proof of \cite[Lemma 3.9]{chia2022sampling}. Below for any matrix $M$, we use $M_{i,j}$ to denote its $(i,j)$-th entry and use $M_{i*}$ to denote the $i$-th row.

\begin{enumerate}
\item Obtain $Q(A)$: For each $(i,j)\in[m]\times [n]$, we can compute $A_{i,j}=\sum_{t=1}^k \lambda_t A^{(t)}_{i,j}$ by querying all $A^{(t)}_{i,j}$.

\item Obtain $SQ(\tilde{A})$: Define $\tilde{A}$ via
\[
\tilde{A}_{i,j} = \sqrt{k \sum_{t=1}^k |\lambda_t A^{(t)}_{i,j}|^2},
\]
It is easy to check that $\|\tilde{A}\|_\F^2 = \phi_A \|A\|_\F^2, |\tilde{A}_{ij}| \geq |A_{ij}|$ for all $(i,j)\in[m]\times [n]$.
\begin{enumerate}
\item Query for $\|\tilde{A}\|_\F^2$: We can compute $\|\tilde{A}\|_\F^2 = \sum_{t=1}^k  \|\lambda_t A^{(t)}\|_\F^2$ directly by querying  $\| A^{(i)}\|_\F^2$ for all $i\in[k]$.
\item Query for $\tilde{A}_{i,j}$: We can compute $\tilde{A}_{i,j}$ by querying $A^{(t)}_{i,j}$ for all $t\in[k]$.
\item Sample from $i$-th row $\tilde{A}_{i*}$: We first sample an index $s\in[k]$ with probability 
$\frac{\|\lambda_s A^{(s)}_{i*}\|^2}{\sum_{t=1}^k \|\lambda_t A^{(t)}_{i*}\|^2},$
then take a sample $j\in[m]$ from $A^{(s)}_{i*}$. The probability of seeing $j$ is
\[
\sum_{s=1}^k\frac{\|\lambda_s A^{(s)}_{i*}\|^2}{\sum_{t=1}^k \|\lambda_t A^{(t)}_{i*}\|^2} \frac{|A^{(s)}_{i,j}|^2}{\|A^{(s)}_{i*}\|^2}
= \frac{\sum_{s=1}^k|\lambda_s A^{(s)}_{i,j}|^2}{\sum_{t=1}^k \|\lambda_t A^{(t)}_{i*}\|^2}
= \frac{|\widetilde{A}_{ij}|^2}{\|\widetilde{A}_{i*}\|^2}.
\]
\item Obtain $SQ(\a)$ where $\tilde{\a}=(\|\tilde{A}_{1*}\|, \ldots, \|\tilde{A}_{m*}\|)$: Note that $\|\tilde{A}_{i*}\|^2 = k \sum_t \|\lambda_t \tilde{A}^{(t)} _{i*}\|^2$. So we can compute it by querying $\|\tilde{A}^{(t)} _{i*}\|^2$ for all $t\in[k]$. To sample from $\tilde{\a}$, we first sample $s\in[k]$ with probability ${\|\lambda_s A^{(s)}\|_\F^2}/{\sum_t \|\lambda_t A^{(t)}\|_\F^2}$, then take a sample $i\in[m]$ from the distribution defined by $\a^{(s)} = (\|A^{(s)}_{1*}\|,\ldots,\|A^{(s)}_{m*}\|)$. The probability is
\[
\sum_s\frac{\|\lambda_s A^{(s)}\|_\F^2}{\sum_t \|\lambda_t A^{(t)}\|_\F^2}
\frac{\|A^{(s)}_{i*}\|^2}{\|A^{(s)}\|_\F^2}
=\frac{\sum_s\|\lambda_s A^{(s)}_{i*}\|^2}{\sum_t \|\lambda_t A^{(t)}\|_\F^2}
=\frac{\|\tilde{A}_{i*}\|^2}{\|\tilde{A}\|_\F^2}.
\]
\end{enumerate}
\end{enumerate}

The analysis for the coordinator to use $SQ_{\phi}(A)$ is similar. For the coordinator to query $A_{i,j}, \tilde{A}_{i,j}$, it costs $O(k \log q)$ bits of communication. To query $\|\tilde{A}\|_\F^2=k\sum_i \|\lambda_i A^{(i)}\|_\F^2$, it costs $O(k\log (qmn))$ bits of communication. Here $O(\log (qmn))$ is the number of bits used to specify $\|A^{(i)}\|_\F^2$. To query for $\|\tilde{A}_{i*}\|^2 = k \sum_t \|\lambda_t A^{(t)} _{i*}\|^2$, it costs $O(k \log (qn))$ bits of communication. To sample from $\widetilde{A}_{i*}$, each player $\P_i$ first sends $\|A^{(s)}_{i*}\|$ to the coordinator, this uses $O(k(\log q+\log n))$ bits of communication. To sample from $\tilde{\a}$, since the coordinator already has the information of $\|A^{(i)}\|_\F^2$ for all $i\in[k]$, this step now only costs $O(\log q)$ bits of communication (namely communicate with player $\P_i$ once for some $i$). 
Finally, to sample from $\tilde{A}_{i*}$, each player needs to send $\|A^{(s)}_{i*}\|$ to the coordinator, then the coordinator has the distribution of seeing $s$ with probability ${\|\lambda_s A^{(s)}_{i*}\|^2}/{\sum_t \|\lambda_t A^{(t)}_{i*}\|^2},$ this uses $O(\log (qn))$ bits of communication. Thus for the coordinator to use $SQ_{\phi_A}(A)$ $O(T)$ times, it costs $O(T\log (qn)+kT\log (qmn))$ bits of communication. By Lemma \ref{lem3.5}, it costs $O(kT\phi_A \log (qmn))$ bits of communication for the coordinator to use $SQ(A)$ $O(T)$ times.
\end{proof}

\bibliography{quantumview-template}
\bibliographystyle{quantum}

\end{document}